\documentclass[12pt]{article}
\usepackage{amsfonts}
\usepackage{eurosym}
\usepackage{graphicx}
\usepackage{booktabs}
\usepackage{float}
\usepackage{subfig}
\usepackage{amssymb}
\usepackage{comment}
\usepackage{amsmath}
\usepackage{booktabs}
\usepackage{multirow}
\usepackage{enumerate}
\usepackage{color}
\usepackage{xcolor}
\usepackage[authoryear]{natbib}
\usepackage{hyperref}

\newtheorem{lemma}{{\bf \sc Lemma}}

\newtheorem{corollary}{{\bf \sc Corollary}}
\newtheorem{proposition}{{\bf \sc Proposition}}

\def\eproof{\hbox{\hskip3pt\vrule width4pt height8pt depth1.5pt}}
\def\E{{\mathrm{E}}}

\allowdisplaybreaks

\begin{document}
%\begin{comment}
\title{The Fragility of Social Learning with Noisy Messages}
\author{Matthew O. Jackson, Suraj Malladi, and David McAdams \thanks{Jackson is from the Department of Economics, Stanford University, Stanford, California 94305-6072 USA, 			and is also an external faculty member at the Santa Fe Institute. Malladi is from the Kellogg MEDS, Northwestern University. McAdams is at the Fuqua School of Business and Economics Department, Duke University. 			%Emails: jacksonm@stanford.edu, suraj.malladi@kellogg.northwestern.edu, david.mcadams@duke.edu.
We gratefully acknowledge financial support under NSF grant SES-1629446 and from Microsoft Research New England. We thank Arun Chandrasekhar, Ben Golub, Sudipta Sarangi, Omer Tamuz, and Moritz Meyer-ter-Vehn for helpful conversations and suggestions. }}
\date{Draft: August 2026}
\maketitle

\begin{abstract}
%[[100 Word version:]]
%We examine how well agents learn when information reaches them through chains of noisy person-to-person relay. If noise only takes the
%form of random mutations and transmission failures, then there is a sharp threshold such that a receiver learns fully if she has access to more chains than the threshold and nothing
%with fewer. Moreover, simple information processing rules can perform as well as fully Bayesian learning. However, if  some
%agents deliberately distort message content, learning may be impossible with any number of chains, even if the fraction of such biased individuals is small.

We examine how agents learn when information from original sources only reaches them after noisy relay.
A receiver learns if and only if they have access to sufficiently many chains of noisy relay
and they perfectly understand the noise process.
However, even slight uncertainty over message mutation rates makes learning from long chains impossible, no matter how many independent sources are accessed.
%This rationalizes long-run disagreement: even agents with a common prior and access to arbitrarily many primary sources, but facing uncertain noise, have different posteriors if they are at different distances from the primary sources.

\textsc{JEL Classification Codes:} D83, D85, L14, O12, Z13

\textsc{Keywords:} Social Learning, Communication, Noise, Mutation, Bias, Fake News
\end{abstract}
%\end{comment}

\thispagestyle{empty}

\setcounter{page}{0} \newpage

\section{Introduction}

In a simple model of social learning resembling the children's game of `telephone,' we show that non-convergence and limited learning is generically the {\sl only} outcome.
The key difference from many standard models of social learning is the introduction of noise in communication.
We show that in the face of uncertainty about noise in communication, agents face a fundamental identification problem---they
cannot distinguish patterns in signals due to fundamentals from patterns in signals due to noise.

In our model, a set of agents are connected in a network. Some of these agents, whom we call primary sources, receive exogenous and independent binary signals about a binary state of the world (e.g., is eating avocados good or bad for one's cholesterol). They noisily relay this information to their neighbors, who relay it to their neighbors, and so on.
We model noise as independent errors at each step of communication, whereby one type of message is sent or misheard as another.
We consider the perspective of an agent (the `learner') who has slight uncertainty about the communication process, e.g., about the propensity of agents to overstate the health benefits of avocados or about the chance that someone misinterprets something they hear.  The learner is at some distance from the primary sources in the network. We ask whether this agent can learn the true state.

We show that even under ideal conditions---when learners are path connected to arbitrarily (even infinitely) many independent primary sources, know  the exact distance between themselves and those sources, and carry out perfect Bayesian updating---if they face any uncertainty about the noise, then they cannot learn the state.

Noise and distance to primary sources diminish how much an agent can learn from any given number of primary sources.
Still, if the learner knew the \emph{exact} probabilities with which ``mutations'' due to communication occur, they could learn the true state from sufficiently many sources (a number growing exponentially in distance to the sources).
However, we show that if the learner has uncertainty %(say, an atomless prior with an arbitrarily narrow support)
about the {relative probability} that messages mutate in either direction, then they cannot perfectly learn the true state, no matter how many independent primary sources they hear from.
For example, suppose that a majority of the messages that the learner hears support the view that avocados are good for cholesterol. This could either be because that is indeed true or because people are more likely to exaggerate than understate what they hear about the benefits of eating avocados.
Indeed, \emph{nothing} can be learned about the state in the limit as the distance to sources grows, even with infinitely-many sources and small uncertainty about mutation rates.

As we show, this limitation on learning
stems from a basic identification failure:  the learner cannot disentangle uncertainty about the state from uncertainty about noise in the communication process.
Either could account for the patterns of communication that an agent observes.

We also discuss how this implies that learners who have a common prior but are simply located at different distances from the primary sources  have different beliefs in the limit as the number of sources goes to infinity. These failures of learning and consensus persist \textit{a fortiori} in less ideal settings with more complex networks, bounded rationality, and noisy communication.

We note that we do not microfound the noise in agents' communication in our model
because our analysis only depends on the probabilities with which mutations occur at each step of the communication process. Thus, our results hold regardless of how this noise is generated, e.g., whether mutations arise from unintentional mistakes in interpretation, intentional miscommunication, or some technological imperfection.\footnote{For some perspectives on incentives in the spread of incorrect information, see \cite{papanastasiou2020fake}, \cite*{acemoglu2010spread}, \cite*{bloch2018rumors}, and \cite*{kranton2024social}.}

\subsection{Related Literature}

Ours is not the first model of learning that can rationalize people holding different beliefs in response to similar information.  Others include biased updating (e.g., \cite*{fryerhj2019}), having optimal models
that diverge based on sample differences (e.g., \cite*{haghtalabjp2021}),
or having misperceptions of others' news access (\cite*{bowen2021learning}).
None of these have noisy communication.

\cite{mostagir2022learning} and \cite*{acemoglu2016fragility} are closest in having uncertainty about the fundamental signal process, but they focus on heterogeneity in priors, with agents each believing the other to be misspecified.
In both models, agents expect to learn the state after observing a sequence of public signals but expect that others will not.\footnote{\cite{mostagir2022learning} focus on comparing Bayesian learning to Degroot learning.
}
By contrast, the agents in our model have no misspecifications or incorrect beliefs, but
instead face an identification problem that they cannot overcome. \cite*{akbarpour2017information} also study a setting where agents share information about a state. They find that beliefs among different agents fail to converge if each share an estimate of the state but not the precision of that estimate. The mechanism for learning failure is different, as agents update beliefs in a non-Bayesian way and share the beliefs. We focus on settings where agents directly share the signals they hear, with noise, and update beliefs in a Bayesian way.

This force is reminiscent of the \textit{confounded learning} results that \cite{mclennan1984price} and \cite{smiths2000} find in the contexts of experimentation and social learning: agents are unable to tell states apart given the observable data. While confounded learning arises there for certain prior beliefs, it happens in our model for essentially all priors, once the agent is sufficiently far from the sources.

Thus, our results complement these other explanations, showing that even fully Bayesian agents with a common prior can end up with
different posterior beliefs when facing an identification problem: being unable to distinguish uncertainty about noise in the communication process from different signals about the state.
Understanding this sort of identification problem stemming from noise, which is ubiquitous in social communication settings, is necessary in order to design policies that correct learning failures.
In particular, this fundamental identification problem  has different policy implications from other frictions that preclude consensus and learning.
To overcome the identification problem, one has to either eliminate the noise (which will not happen in many settings) or educate the population about the extent of noise in the system.\footnote{\cite*{jackson2022learning} study a similar noisy communication process but focus on non-Bayesian updating. They examine optimal policies in terms of changing the network structure to alleviate the learning failures due to noisy communication. By contrast, we identify a failure of learning that happens under Bayesian learning.}

\

\section{The Base Model of Noisy Information Transmission}\label{model}

We begin by studying how noise builds up along a chain that can travel a path of length $T$ from an
original source to a Bayesian ``learner.''

Information passes by ``word of mouth.''  This can be oral, written,
via social media, etc.

There are two possible states of the world, $\omega \in \{0, 1\}$.
Let $\theta\in (0,1)$ be the prior probability that the state is 1.

A sequence of agents $\{1,2,\ldots, T\}$, referred to as a ``chain,'' successively relays a signal of
the state via word of mouth, terminating with the learner at $T\geq 1$.

We do not model what the learner does with this information, but one can think of
the learner preferring to match their action with the state.
For instance, the learner may hear from friends about whether a certain diet is good for cardiovascular health and decide whether to adopt it.\footnote{The
	learner may have information from sources outside of its network.
	If these sources are not direct, then they can be modeled as part of the network.
	Otherwise, we can think of this external information as being reflected in the prior.
	We are interested in studying what the learner is able to learn from messages conveyed
	within their network.}

A first agent in a chain, interpreted as ``a primary source,''
observes a noisy signal of the state, $s_1 \in \{0,1, \emptyset\}$.\footnote%
{We focus on a binary world to crystallize the main ideas.  Extensions to richer state spaces and signal structures are left for future research.}
That signal is transmitted with noise becoming $s_2\in \{0,1, \emptyset\}$, and so on
until signal $s_T$ reaches the learner.

The ``null signal,'' $s_{t} = \emptyset$, indicates that no signal was received,
in which case no signal is transmitted.  Another possibility is that something was received,
but that the information was sent along in some incoherent manner:  one person hears from another but cannot understand what was said and so has no information to pass along.
In particular, if agent $t\geq 1$ receives the null signal $s_{t} = \emptyset$,
then all subsequent agents (including the learner) also receive the null signal.

If agent $t\geq 1$ receives a signal $s_t \in \{0,1\}$,
then that agent passes a signal along ($s_{t+1}\neq \emptyset$)
with probability $p_1$ if $s_t = 1$, and with probability $p_0$ if $s_t = 0$.
Thus, for instance, if $p_1 > p_0$ then agents are more likely to transmit a signal if they heard a 1, and vice versa if $p_1 < p_0$.  With the remaining probabilities of $1-p_1$ and $1-p_0$, respectively, the signal is dropped and $s_{t+1}=\emptyset$.

Each time a non-null signal is transmitted, that signal mutates from 0 to 1 with probability
$\mu_{01}\in [0,1/2)$, or from 1 to 0 with probability $\mu_{10}\in [0,1/2)$; we let $M \equiv  1 - \mu_{01} - \mu_{10}$.
We focus on the case where mutation rates are less than 1/2:
signals are more likely to be transmitted faithfully than flip at each step.
Again, these mutations could be from a person deliberately changing a message to suit their
personal preference, or could be due to some misunderstanding or other
noise in communication.

Our reduced-form model of communication suffices for our study
of the capacity for receivers to learn.
We emphasize that for any microfoundation of senders' (potentially heterogeneous)
incentives, only the resulting average probabilities of mutation and message dropping
at each step matter for our analysis.
Thus, even though we do not microfound the noise in agents' communication in our model,
it does not matter whether agents make mistakes in interpretation, have incentives to intentionally distort information, or face some technological imperfection, as long as these result
in mutations of information.

In summary, if $s_{t-1}=1$, the next agent (including $t=1$) hears: $s_t=1$ with probability $p_1(1-\mu_{10})$, $s_t=0$ with probability $p_1\mu_{10}$, and $s_t=\emptyset$ with probability $1-p_1$.
Similarly, conditional on $s_{t-1}=0$: $s_t=1$ with
probability $p_0\mu_{01}$, $s_t=0$ with probability $p_0(1-\mu_{01})$,
and $s_t=\emptyset$ with probability $1-p_0$.
If $s_t=\emptyset$ for some $t$, then $s_{t+1}=\emptyset$.
This defines a $3\times 3$ Markov chain in which $\emptyset$ is an absorbing state.\footnote
{Note that the setting is stationary in that
	the initial signal $s_1$ is derived from the original state in the same way as any other $s_t$ depends on $s_{t-1}$,
	as if nature were ``agent 0'' in the chain with signal $s_0$ equal to the state.
	This assumption simplifies the expressions,
	but our analysis easily extends to allow first-signal accuracy and transmission failure rates to
	differ from subsequent ones.}

Our analysis presumes that the learner has access to
some number $n \geq 1$ of length $T$ chains of messages, relayed through an \textit{information network}.
This network is a depth $T$ directed tree, with nodes representing agents, $n$ leaves representing the primary sources, the root representing the learner, and edges representing the direction of relay. 
Each path from a leaf to the root is a chain along which messages are forwarded.  Throughout, we measure depth/distance by the number of noisy transmission steps that a signal undergoes en route to the learner.
We let $R(n, T)$ be the set of such trees, with three examples pictured in Figure \ref{fig:tree_examples}.

\begin{figure}[h!]
	\centering
	\includegraphics[scale=0.5]{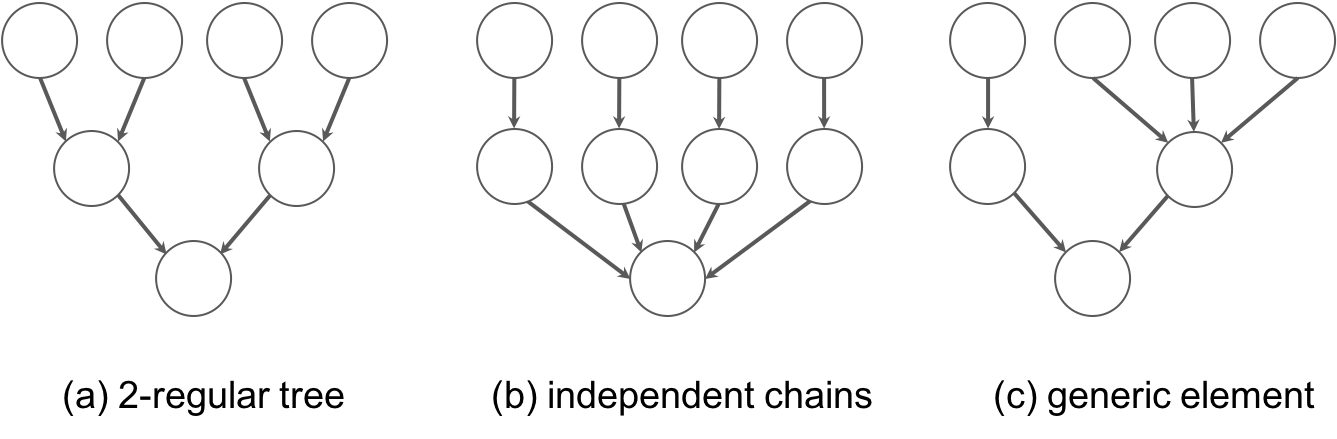}
	\caption{Three trees in $R(4, 2)$, the set of depth-2 directed trees with 4 leaves.}
	\label{fig:tree_examples}
\end{figure}

Conditionally independent signals of the state
are independently relayed along each of these chains of length $T$ via the same noisy process to the same learner.
An example of the communication process over a 2-regular, depth 4 tree is pictured in
Figure \ref{fig:comm_diagram}. The learner's ability to learn depends only on the number of primary sources, $n$, they are connected to and their distance, $T$. We  therefore leave the precise structure of the information network within $R(n, T)$ unspecified in the statement of our results, and refer to the number of sources at distance $T$, $n(T)$.

\begin{figure}[h!]
	\centering
	\includegraphics[scale=0.4]{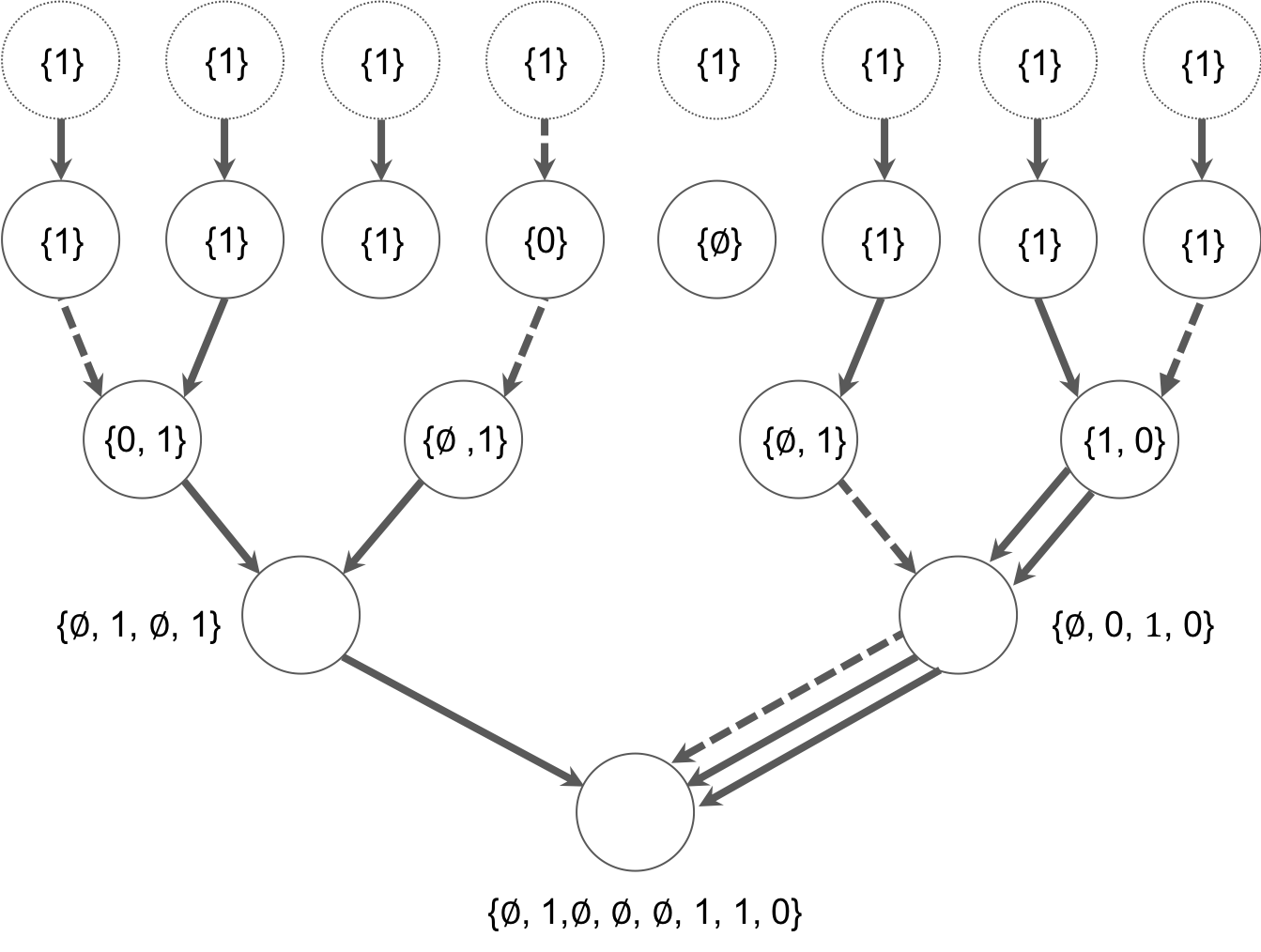}
	\caption{
		The root node (``learner'') receives messages passed through eight paths, each starting from a different source. The absence of an arrow from one node to the one below it indicates that no message was sent, a dashed arrow indicates the message was delivered but mutated, and a solid arrow indicates that the message was delivered un-mutated. In this example, the true state is $1$ and paths 1-3 and 6-8 begin with a correct initial signal, while path 4 begins with an incorrect initial signal and path 5 begins with no signal received. Initial messages are delivered on paths 1,2,4, and 6-8, mutating from 1 to 0 on path 1 and from 0 to 1 on path 4, and undelivered on paths 3 and 5. Messages are then  relayed on path 2,4, and 6-8, mutating from 1 to 0 on path 6, but dropped on path 1. Finally, messages are re-relayed on paths 2 and 6-8, mutating from 0 to 1 on path 6, but dropped on path 4. Overall, the learner hears four messages, of which two never mutated, one mutated once, and one mutated twice.}
	\label{fig:comm_diagram}
\end{figure}

Before analyzing the model, we make a few comments on its structure.

Long streams of relaying of messages and mutations are common in today's media.
For example, in \cite{adamicetal2016}'s study of online viral memes,
one meme was reposted more than 470,000 times, with a mutation rate of around 11 percent and
more than 100,000 variants.  121 of the 123 most viral
memes each had more than 100,000 variants.
There are many other examples that we discuss in \cite{jacksonmm2021}, including settings documenting extremely long strings of relayed messages
\citep{liben2008tracing}. At the same time, access to various information sources has grown as well. Accordingly, our analysis focuses on how agents learn as more messages are relayed to them, but over longer chains. Our main result is that learning is difficult when the learner has even slight uncertainty about the communication process.

In cases in which nodes relay multiple signals, one could alternatively envision nodes aggregating signals and then passing along an estimate of the state.
It is also natural to consider correlated noise across signals transmitted by any node: each node may be prone to one kind of bias.
Introducing these complications would correlate mutations making learning even more challenging.

Adding cycles in the network would do the same.
While we, as researchers, can develop expressions for Bayesian updating in tree-like networks, introducing cycles quickly makes the problem intractable.  It introduces complex correlations and expressions that explode combinatorially with the numbers of cycles and resulting walks in the network.
Experiments on networks with even just a few cycles result in subjects being much better predicted by models that involve subjects averaging signals rather than doing any Bayesian updating (e.g., see \cite*{chandrasekarlx2013}).
Thus, we focus on independent noise along trees, because at least in terms of being the only tractable setting for updating, it gives the best possibility for learning. We show that learning can be precluded even then.

\section{Learning from Message Content}\label{section_learning}

We first explore whether (and how much) the learner can learn about the true state in a special case of our model in which all messages are equally likely to be transmitted; i.e., $p_1 = p_0 = p$. In Section \ref{subsection_survival}, we allow for the possibility that message content impacts the likelihood of transmission.

Our analysis focuses on learners who use either a threshold of signal content (e.g., whether more than a given fraction of signals are 1s vs 0s) or a survival threshold (e.g., whether more than a given number of signals are observed). 

A Bayesian learner would account for both signal content and survival. However, as we show in Appendix B, the limit learning of such a Bayesian learner is the same as the better of our two simple-learning types.\footnote{We show this for the case of symmetric mutation rates. The case of asymmetric mutation rates is more complex to analyze, but we conjecture that the same equivalence holds there.} 

\subsection{Learning From Content Along a Single Chain}\label{subsection_decay}

Lemma \ref{miracle} characterizes learning along a single chain of messages when the message originates from a node at distance $t$ from the observer. Recall that $M \equiv  1 - \mu_{01} - \mu_{10}$.

\begin{lemma}
	\label{miracle}
	Suppose that $p=p_0 = p_1>0$ and consider any mutation rates $\mu_{01}, \mu_{10} \in (0,1/2)$. If the state is 0 and agent $t \geq 1$ receives a non-null message, then the message is 0 (matching the true state) with probability
	\begin{equation*}
	X_0(t) = \frac{\mu_{10} + \mu_{01}M^t}{\mu_{10} + \mu_{01}}.
	\end{equation*}
	where $M = 1 - \mu_{10} - \mu_{01}$. If the state is 1 and agent $t \geq 1$ receives a non-null message, the message is 1 (matching the true state) with probability
	\begin{equation*}
	X_1(t) = \frac{\mu_{01} + \mu_{10}M^t}{\mu_{01} + \mu_{10}}.
	\end{equation*}
	It follows that the difference between the probability of getting a given signal in its corresponding state compared to the other state is
	\begin{equation*}
	X_0(t)- (1-X_1(t)) = X_1(t)- (1-X_0(t)) = M^t.
	\end{equation*}
	As $t$ grows, regardless of the starting state, the limit probabilities that a surviving message is a 0  or a 1, respectively, are
	\begin{equation*}
	\pi_0 = \frac{\mu_{10}}{\mu_{10} + \mu_{01}}    \ \ \  \mathrm{ and } \ \ \ \pi_1 = \frac{\mu_{01}}{\mu_{10} + \mu_{01}}.
	\end{equation*}
	Finally, if $\mu_{01} = \mu_{10} = \mu$, then the message matches the true state with probability
	\begin{equation}\label{eqn_Xt}
	X(t) = \frac{1+M^t}{2}.
	\end{equation}
\end{lemma}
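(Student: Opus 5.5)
The plan is to condition on survival and reduce everything to a two-state Markov chain on $\{0,1\}$. Because $p_0=p_1=p$, the drop event at each step is independent of the content of the message. Conditional on agent $t$ receiving a non-null message, every one of the $t$ transmissions (nature to agent 1, agent 1 to agent 2, and so on) succeeded. Given success, each transmission applies the mutation kernel
\[
K=\begin{pmatrix} 1-\mu_{01} & \mu_{01}\\ \mu_{10} & 1-\mu_{10}\end{pmatrix},
\]
where rows index the sent message $0,1$ and columns the received message. Drop probabilities factor out as the common term $p^t$. So the conditional law of $s_t$ given $s_t\neq\emptyset$ and $\omega$ is the row $\omega$ of $K^t$, using the footnote's convention that nature acts as ``agent 0'' with $s_0=\omega$. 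The point to check carefully is that conditioning on survival does not distort content. This holds precisely because $p$ does not depend on content: the joint probability of any content path together with survival is $p^t$ times the mutation-path probability.

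Next, I compute $K^t$ by spectral decomposition. $K$ is stochastic, so it has eigenvalue $1$ with right eigenvector $(1,1)^\top$. Its other eigenvalue is $\operatorname{tr}K-1=1-\mu_{01}-\mu_{10}=M$. The stationary row vector is $\pi=(\pi_0,\pi_1)=\bigl(\mu_{10},\mu_{01}\bigr)/(\mu_{01}+\mu_{10})$, since $\pi K=\pi$ is checked in one line. This gives
\[
K^t=\mathbf{1}\pi+M^t\,(I-\mathbf{1}\pi).
\]
I would verify this by induction, using $K\mathbf{1}\pi=\mathbf{1}\pi$ and $(K-\mathbf{1}\pi)(I-\mathbf{1}\pi)=M(I-\mathbf{1}\pi)$. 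The latter holds because $K-\mathbf{1}\pi$ annihilates $\mathbf{1}$ and acts as $M$ on the complementary direction. Alternatively, one can set $x_t=\Pr(s_t=0\mid\omega=0,\text{survive})$, get the recursion $x_{t+1}=x_t(1-\mu_{01})+(1-x_t)\mu_{10}=\mu_{10}+Mx_t$ with $x_0=1$, and solve the affine recursion with fixed point $\pi_0$. This yields $x_t=\pi_0+(1-\pi_0)M^t$, which is exactly $X_0(t)$. The symmetric recursion for state 1 gives $X_1(t)=\pi_1+(1-\pi_1)M^t$. I would present the recursion version since it is the most elementary.

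The remaining claims then follow directly. For the difference,
\[
X_0(t)-(1-X_1(t))=\pi_0+(1-\pi_0)M^t-1+\pi_1+(1-\pi_1)M^t=M^t,
\]
using $\pi_0+\pi_1=1$. The same computation with the roles of the states swapped gives the other difference. For the limit, $\mu_{01},\mu_{10}\in(0,1/2)$ gives $M\in(0,1)$, so $M^t\to0$ and both conditional laws converge to $(\pi_0,\pi_1)$ regardless of $\omega$. In the symmetric case, $\pi_0=\pi_1=1/2$, which gives $X(t)=(1+M^t)/2$. The main, though mild, obstacle is the survival-conditioning argument in the first step. The rest is routine linear algebra.
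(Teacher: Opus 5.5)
Your proposal is correct and is essentially the paper's argument. The paper verifies the closed form for $X_0(t)$ by induction on exactly the one-step recursion $x_t=\mu_{10}(1-x_{t-1})+(1-\mu_{01})x_{t-1}$, with base case $x_1=1-\mu_{01}$, which you instead solve directly as an affine recursion with fixed point $\pi_0$. Your explicit check that conditioning on survival leaves the content distribution unchanged when $p_0=p_1$, and your derivations of the difference, limit, and symmetric-case claims, fill in steps the paper leaves implicit.
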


Note that $X(t) > 1/2$ for all $t$,  $\lim_{t \to \infty} X(t) = 1/2$, and $X(t) - 1/2$ decreases exponentially at rate $1-2\mu$.
Intuitively, the rate of decay,  $1-2\mu = (1-\mu) - \mu$, is how much more likely one is to get an unmutated signal than a mutated one from one period to the next.

\subsection{Learning from Content Along Many Chains}\label{subsection_mutation}

We now characterize the threshold number of independent
word-of-mouth chains that a Bayesian learner needs to access in order to have an accurate view of the true state.

Suppose that the learner has access to $n(T)$ independent chains of length $T$.
We index $n$ by $T$, because we seek to characterize how many chains are needed as a function of their length.  Longer chains are more likely to be null or to have an incorrect message and so more are needed to deliver an equivalent amount of information.
Let $I_{n(T)}$ be the vector of (potentially null) random messages that the learner receives from the chains, and let
the random variable $b(n(T),T)=\Pr_T (\omega=1 | I_{n(T)} )$ be the posterior probability that the state equals 1 conditional on the information from $n(T)$ originating signals that have each independently traveled $T$ steps.

%\begin{definition}[Threshold for learning]
We say that
$\tau(T)$  is a {\sl threshold for learning}  if (i) $plim \ |b(n(T),T) - \omega| = 0$ whenever $n(T)/\tau(T) \rightarrow \infty$ and (ii) $plim \  b(n(T),T) = \theta$ whenever $n(T)/\tau(T) \rightarrow 0$.\footnote{For any sequence of random variables $(X_T)_{T=1}^{\infty}$, we write $plim X_T = X$ to mean that this sequence converges to $X$ in probability.}
%\end{definition}

A threshold for learning is sharp in that if the number of chains of messages is of higher order, then the receiver learns the true state with a probability going to one, while if it is of
lower order, the receiver learns {\sl nothing}.

\begin{lemma}
	\label{multipath1}
	Consider a learner who is connected to $n(T)$ primary sources at distance $T$ and who knows $p,\mu_{01},\mu_{10}$.
	Then $\frac{1}{p^TM^{2T}}$ is a threshold for learning.
\end{lemma}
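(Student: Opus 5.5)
Since $p_0=p_1=p$ throughout this section, the survival of a message is independent of both its content and the state. The null/non-null pattern is therefore ancillary, and the posterior $b(n(T),T)$ depends only on the content of the surviving messages. Conditional on the state, each of the $n(T)$ chains independently delivers a non-null message with probability $p^T$. Given survival, Lemma \ref{miracle} says the message is a 1 with probability $q_1 = X_1(T)$ in state 1 and with probability $q_0 = 1-X_0(T)$ in state 0. These satisfy $q_1-q_0=M^T$, and both $q_0$ and $q_1$ converge to $\pi_1\in(0,1)$. So conditional on $\omega$, each chain independently yields one of three outcomes: a 1 with probability $p^T q_\omega$, a 0 with probability $p^T(1-q_\omega)$, or $\emptyset$. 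The plan is to compare the two product measures $P_0^{\otimes n}$ and $P_1^{\otimes n}$ and show they become mutually singular or merge according to whether $n p^T M^{2T}\to\infty$ or $\to 0$.

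\emph{Learning when $n(T)p^TM^{2T}\to\infty$.} I will use a threshold test on the count $K$ of received 1s. The statistic is the centred count $K-n p^T\bar q$ with $\bar q=(q_0+q_1)/2$. Its conditional mean is $\pm\frac12 n p^T M^T$ in states $1$ and $0$ respectively, and its conditional variance is at most $n p^T$. By Chebyshev, the test that declares the state is 1 exactly when this statistic is positive has error probability at most $4 n p^T/(n p^T M^T)^2 = 4/(n p^T M^{2T})\to 0$ in each state. A Bayes posterior with prior $\theta\in(0,1)$ satisfies $\E|b-\omega|\le \Pr(\text{error of any test})/\min(\theta,1-\theta)$, because the Bayes risk under $|\cdot|$ loss is minimized by the posterior. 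Hence $|b-\omega|\to 0$ in probability.

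\emph{No learning when $n(T)p^TM^{2T}\to 0$.} I will bound a divergence between the two product measures. The $\chi^2$ (or Hellinger) divergence of a single chain is
\begin{equation*}
\chi^2(P_1\|P_0)=p^T\Big[\frac{(q_1-q_0)^2}{q_0}+\frac{(q_1-q_0)^2}{1-q_0}\Big]=\frac{p^TM^{2T}}{q_0(1-q_0)}.
\end{equation*}
Since $q_0(1-q_0)$ is bounded away from 0, this is $O(p^TM^{2T})$. For products, $1+\chi^2(P_1^{\otimes n}\|P_0^{\otimes n})=(1+\chi^2)^n\le \exp(C n p^T M^{2T})\to 1$, so the total variation distance between the two conditional laws of $I_{n(T)}$ tends to 0. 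The likelihood ratio $L=dP_1^{\otimes n}/dP_0^{\otimes n}$ then satisfies $\E_0(L-1)^2\to 0$, so $L\to 1$ in probability under $P_0$. Because the measures are contiguous (total variation goes to 0), $L\to 1$ in probability under $P_1$ as well. Since $b=\theta L/(\theta L+1-\theta)$, it follows that $b\to\theta$ in probability under either state.

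The main obstacle is the uniformity in the second part. I must check that $q_0$ stays bounded away from 0 and 1 for all $T$, which follows from $\mu_{01},\mu_{10}\in(0,1/2)$ and $q_0\to\pi_1$ (and $q_0\ge \mu_{01}$-type bounds for small $T$). I must also pass carefully from the divergence bound to convergence in probability of the posterior under each state. I will handle the latter via the $L^2$ bound on $L-1$ under $P_0$, plus total variation to transfer the result to $P_1$.
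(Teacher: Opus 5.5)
Your proposal is correct and follows essentially the paper's route. For the learning direction both arguments use a Chebyshev separation of the message counts (the paper counts 0s, you count 1s). For the no-learning direction both bound a per-chain divergence of order $p^TM^{2T}$ and tensorize it over independent chains: your $\chi^2$ bound with $(1+\chi^2)^n\le e^{n\chi^2}$ replaces the paper's KL additivity plus Pinsker, and you spell out the passage to the posterior that the paper leaves terse.

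One justification needs a small fix: the posterior mean minimizes squared loss, not absolute loss. Your inequality still holds, because $\theta\,\E_1[1-b]=\E[b(1-b)]\le \E[\min\{b,1-b\}]$, and the right side is at most the prior-weighted error of any test (symmetrically for state 0).
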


The threshold in Lemma \ref{multipath1} is sharp, and translates directly into a threshold for the average degree in the tree.
For instance,
suppose that the learner receives word-of-mouth messages
through a random tree generated by a Galton-Watson branching process  in which average degree distribution $F$ places weight 0 on degree 0 and has finite variance.\footnote%
{This condition ensures that the tree does not die out and so has at least some paths of depth $t$ with probability one. The analysis can be adapted to allow for extinction, but no new insight emerges.}
Then, one can easily show that
$plim \ b(t) = 1$ or $0$ whenever $\mathbb{E}_{d\sim F}[d]>\frac{1}{pM^{2}}$ and (ii) $plim \  b(t) = \theta$ whenever $\mathbb{E}_{d\sim F}[d]<\frac{1}{pM^{2}}$.

\subsection{The Impossibility of Learning with Uncertain Mutation Rates}\label{extremists}

Lemma \ref{multipath1} shows that learning is possible with sufficiently many primary sources, no matter how far away these sources may be, so long as the learner knows the mutation rates $\mu_{10}$ and $\mu_{01}$.
In practice, however, agents are at least somewhat uncertain about these mutation rates.
As we show next, slight uncertainty about the {ratio} of these rates can dramatically limit what can be learned even from infinitely-many sources. Moreover, how much an agent with infinitely-many sources can learn itself converges to nothing as the distance to those sources goes to infinity, meaning that learning is {\sl completely precluded} in the limit as the distance to sources grows.

Via a standard calculation (as $t$ grows) for a two-state Markov chain, the steady state limit probability that a surviving message is 1 is
\[
\frac{\mu_{01}}{\mu_{01} + \mu_{10}} \equiv \rho.
\]

\begin{proposition}
	\label{treeLearning3}

 \begin{enumerate}
\item If the learner knows $\rho$, then for any $\varepsilon>0$ and $T$ there exists $n(\varepsilon,T)$, such that if there are at least $n(\varepsilon,T)$ primary sources
then $\text{Pr}(|b(n(T), T) - \omega| < \varepsilon) > 1-\varepsilon$.

\item By contrast, if the learner does not know $\rho$ but has a prior over this ratio with a continuous and strictly positive density on a connected support, then  for any $n:\mathbb{N}\to \mathbb{N}$,
	 $b(n(T),T)$ converges in probability to $\theta$ as $T$ goes to infinity.

 \end{enumerate}
\end{proposition}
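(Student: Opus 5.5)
\emph{Reduction.} Throughout I work in the setting of this section, $p_0=p_1=p$, and I assume the learner knows $p$. Each chain then survives to the learner with probability $p^T$, independently of the state and of the mutation rates. So the survival pattern carries no information about $\omega$, and the data reduce to the number $m$ of surviving messages and the number $k$ of them that equal $1$. By Lemma~\ref{miracle}, written in terms of $\rho=\mu_{01}/(\mu_{01}+\mu_{10})$, a surviving message is a $1$ with probability
\begin{equation*}
q_1 = \rho(1-M^T)+M^T \ \text{ in state } 1, \qquad q_0=\rho(1-M^T) \ \text{ in state } 0 .
\end{equation*}
Conditional on $m$, $k$ is $\mathrm{Bin}(m,q_\omega)$. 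The key observation is that for every $M\in(0,1]$ we have $q_0<\rho<q_1$, with gap at least $\min(\rho,1-\rho)M^T$.

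\emph{Part 1.} With $\rho$ known, I use the test ``guess $\omega=1$ iff $k/m>\rho$.'' Fix the true $(\mu_{01},\mu_{10})$ and $T$. As the number of sources grows, $m\to\infty$ in probability because $m\sim\mathrm{Bin}(n,p^T)$. By the law of large numbers, $k/m\to q_\omega$, so the test errs with probability tending to $0$ for every true parameter value, even though $M$ itself may be unknown. The Bayes rule's prior-averaged error probability, $\E[\min(b,1-b)]$, is at most that of this test. By dominated convergence over any prior on the remaining unknowns (including $M$), this quantity tends to $0$. Markov's inequality then gives $\Pr(|b-\omega|<\varepsilon)>1-\varepsilon$ for $n\ge n(\varepsilon,T)$.

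\emph{Part 2.} Let $f$ be the prior density of $\rho$. First treat $M$ as known, or condition on it, assuming $M\le \bar M<1$. Under state $\omega$, the induced prior on $q_\omega$ is an affine image of $f$:
\begin{equation*}
g_1(q)=\tfrac{1}{1-M^T}f\!\left(\tfrac{q-M^T}{1-M^T}\right), \qquad g_0(q)=\tfrac{1}{1-M^T}f\!\left(\tfrac{q}{1-M^T}\right).
\end{equation*}
The distribution of $(m,k)$ under state $\omega$ is the mixture $\int \Pr(m,k\mid q)\,g_\omega(q)\,dq$, where the kernel is the same for both states. Hence the total variation distance between the data laws in the two states is at most $\|g_1-g_0\|_{L^1}$, and this bound is \emph{uniform in} $n$. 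Since dilations and translations are continuous in $L^1$, $\|g_\omega-f\|_{L^1}\to0$ as $M^T\to0$, so the bound goes to $0$ as $T\to\infty$. Continuity and positivity of $f$ on a connected support are more than enough here; integrability is what is used.

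Finally I convert total variation into beliefs. The posterior odds are $\frac{\theta}{1-\theta}\,L$, with $L=dP_1/dP_0$. Write $P=\theta P_1+(1-\theta)P_0$ for the marginal law of the data. Then $\E_{P_0}|L-1|=2\,\mathrm{TV}(P_1,P_0)\to0$, and likewise for $1/L$ under $P_1$. So $L\to1$ in $P$-probability, and hence $b(n(T),T)\to\theta$ in probability for any $n(\cdot)$.

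\emph{Main obstacle.} The delicate step is a joint prior over $(\rho,M)$. If $M$ is independent of $\rho$ and bounded away from $1$, I would condition on $M$, apply the bound above, and average using dominated convergence. Otherwise I would need $\Pr(M^T>\delta)\to0$ under the prior, which holds whenever the prior puts no mass on $M=1$ (zero mutation). I would add this as an explicit maintained assumption. It follows from $\mu_{01},\mu_{10}>0$ with positive density, as in the proposition's setting.
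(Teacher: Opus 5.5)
Your Part~1 is fine. The test ``$k/m>\rho$'' is consistent because $q_0<\rho<q_1$, the Bayes risk $\E[\min(b,1-b)]$ is at most that test's error, and $\E|b-\omega|=2\E[b(1-b)]$ turns this into the required bound. With $M$ known, as in the paper, this is a cleaner route than the paper's likelihood-ratio argument.

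The gap is in Part~2, and it concerns \emph{which} law your convergence in probability is under. Your total-variation bound controls $P_\omega=\int \Pr(m,k\mid q)\,g_\omega(q)\,dq$, which is the law of the data when $\rho$ is itself drawn from the learner's prior $F$. So what you correctly prove is $\int \Pr_{\rho,\omega}(|b-\theta|>\varepsilon)\,dF(\rho)\to 0$: the learner \emph{expects} to learn nothing. The proposition, as the paper proves it, fixes the true mutation rates (a true $\rho^*$ in the interior of the support) and the true state. It claims convergence under that data-generating process, and an $F$-average gives nothing at any particular $\rho^*$.

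Your own remark that only integrability of $f$ is used shows the argument cannot reach that statement. Suppose $f$ jumps from $a$ to $c\neq a$ at $\rho^*$ and $n(T)p^TM^{2T}\to\infty$. Then the likelihood in $\rho$ concentrates on a window much narrower than $\Delta\equiv M^T/(1-M^T)$. In state $0$, $L$ then converges in distribution to $a/\big(a(1-U)+cU\big)$ with $U$ uniform on $[0,1]$, not to $1$. Continuity and positivity of $f$ are exactly what the fixed-$\rho^*$ statement needs.

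What is missing is a localization at the true parameter. In your notation $g_1(q)=g_0(q-M^T)$, which is the paper's identity $P_{k,m,T,\rho}(1)=P_{k,m,T,\rho+\Delta}(0)$. Hence
\[
L=\frac{\int \Pr(m,k\mid q)\,g_0(q-M^T)\,dq}{\int \Pr(m,k\mid q)\,g_0(q)\,dq}.
\]
Since $f$ is uniformly continuous and bounded away from $0$ on $[l,h]$, the weight ratio $g_0(q-M^T)/g_0(q)$ tends to $1$ uniformly outside two boundary strips of width $M^T$. It remains to show that, under the true $q_\omega(\rho^*)$, the likelihood $\Pr(m,k\mid\cdot)$ gives these strips asymptotically negligible weight relative to the denominator. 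When $m$ is large, this follows from binomial concentration of $k/m$ near $q_\omega(\rho^*)$, which stays a fixed distance from the strips; this is the role of Lemma~\ref{lembinom} in the paper. When $m$ is bounded, it follows from a uniform bound on the normalized likelihood. That argument, not the $L^1$ bound, delivers the statement at a fixed $\rho^*$ for every $n(\cdot)$. The same averaging caveat applies to your treatment of an unknown $M$ by integrating over its prior.
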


The first part of Proposition \ref{treeLearning3} says that a learner who exactly knows the relative mutation rates learns perfectly from sufficiently many sources.
The second part of Proposition \ref{treeLearning3} says that as the distance to sources grows, the learner learns nothing about the state no matter how quickly the number of sources grows.
Note that Proposition \ref{treeLearning3} makes very weak assumptions about the nature of uncertainty over $\rho$.
In particular, even if the learner has a prior with a very narrow support so that they are nearly certain about $\rho$, they cannot learn if $T$ is sufficiently large.

When networks are sufficiently shallow, small enough uncertainties
over mutation rates do not preclude learning from sufficiently many sources.
But the broad intuition for the second part of Proposition \ref{treeLearning3} is that depth compounds the effects of even small uncertainties and induces an
identification problem for the receiver: once chains are sufficiently long, full learning is no longer possible.
Observing more 1 messages than some threshold can either indicate that the
state is 1 or that people are slightly more biased towards mutating 0's to 1's
than the receiver anticipated.
The next subsection discusses this more explicitly.

The intuition for the second part of Proposition \ref{treeLearning3} is that, as $T$ grows,
the fraction of 1 versus 0 messages converges to   $\frac{\mu_{01}}{\mu_{10}}$.
Learning comes from the fact that there is a bias away from $\frac{\mu_{01}}{\mu_{10}}$ in favor of the starting state, but that bias vanishes as $T$ grows.
Lemma \ref{multipath1} says that, if there are enough signals as a function of $T$ (so $n(T)$ grows fast enough), then that slight bias can be discerned.
However, any uncertainty about $\frac{\mu_{01}}{\mu_{10}}$ completely swamps
the vanishing difference in the relative frequency of signals that reflects the starting
state.

\subsection{The Lack of Identification and the Failure of Learning}\label{app:impossibility_intuition}

To hone intuition for Proposition \ref{treeLearning3} and understand the identification failure, consider a hypothetical extreme case in which there are {\sl infinitely-many} independent chains.
In this case, the learner at distance $T$ gets an infinite number of signals that come in ratios proportional to their expected values.
We {\sl heuristically} reason at the limit to show that the results are due to identification, and not an order of limits argument.

Recall from Lemma \ref{miracle} that, when the state is 1 or 0, each surviving (non-null) message's likelihood of being 1 equals $X_1(T)$ or $1-X_0(T)$, respectively. As the number of primary sources goes to infinity, fraction $X_1(T)$ of surviving  messages are 1 if the state is 1, while fraction $1-X_0(T)$ are 1 if the state is 0. Recall also that
$$X_1(T)- (1-X_0(T)) = M^T= (1-\mu_{01}-\mu_{10})^T,$$
which implies $1-X_0(T) = X_1(T) - M^T$. Note that the fraction of 1's heard by the learner is slightly smaller when the state is 0 than when the state is 1, but only by the vanishing amount $M^T$.

Suppose that a learner at distance $T$ from primary sources sees a fraction $f$ of messages that are 1.
What are the possibilities about the state and mutation rates that can rationalize what the learner has observed?
\emph{Either} the state is 1 and $\mu_{01},\mu_{10}$ are such that
they solve
$f= X_1(T)$
\emph{or} the state is 0 and $\mu_{01},\mu_{10}$ are such that
they solve
$f = X_1(T) - M^T$.

Note that as $T$ becomes large, both $X_1(T)$ and $ X_1(T) - M^T$ converge to $\frac{\mu_{01} }{\mu_{01} + \mu_{10}}$. Therefore, the fraction of 1's heard, $f$, is almost entirely driven by the mutation rates and
nearly the same $\mu_{01},\mu_{10}$ can solve both equations.
Since the learner can easily rationalize either equation with similar mutation rates, there is no way for the learner to distinguish which state it must be---unless the learner knows $\mu_{01},\mu_{10}$ sufficiently precisely to rule out very similar combinations of these mutation rates.
As $T$ grows, the required precision converges to requiring the learner know the ratio of mutation rates exactly.

In contrast, for small $T$, so that the learner is close to the original sources, $M^T$ is nontrivial and the mutation rates that could solve the two equations are substantially different and some might be ruled out by the learner's prior.  Thus, at least partial learning can be possible in the face of uncertain mutation rates if the learner is sufficiently close to the sources.
However, given that $M^T$ decays exponentially and that information is often relayed nontrivial distances, this can be demanding.

We note that being connected to primary sources at different distances, and being able to track those distances, can improve learning.  If the learner could observe many messages at distance $t$ and then again at distance $t+1$, {\sl and can distinguish how far a message has traveled}, then they could see how messages change with additional distance. This would help her further identify $\mu_{01},\mu_{10}$. Thus, at least partial learning in the face of uncertainty is possible if the learner is \emph{either} close to all sources \emph{or} can precisely identify the distance that messages have traveled, both of which are demanding conditions.

\subsection{Rational Disagreement}

\cite{golub2017learning} conclude a review of social-learning models by noting: ``Long-run consensus is a central finding throughout this literature\ldots The consistency of this finding may cause some discomfort because we often observe disagreement empirically, even about matters of fact\ldots A theory explaining long-run disagreement, especially one with rational foundations and appropriate sensitivity to network structure, would constitute a valuable contribution.''
Such long-run disagreement is a main implication of the identification problem that we document. Our theory shows why agents with varying proximity to primary sources {\sl must} entertain
different beliefs in the face of uncertain noise, with disagreement persisting even as the number of such primary sources grows large.

We have so far taken the perspective of a single learner who is some distance away from the primary sources. However, we may also consider the updating problem of multiple learners who occupy different positions in the information network. Suppose they start with a common prior about the state. If these learners sit at different distances from primary sources (e.g., a health scientist may sit closer to people doing research on the consequences of eating avocados than a typical person), so too does their ability to learn.

As an extreme example, consider a network that has such high degree across all nodes that, absent any uncertainty over mutation rates, everyone would be able to learn the state with high certainty. Introducing a small uncertainty over the mutation rates makes it so that people sufficiently close to the sources continue to learn perfectly, while sufficiently distant nodes update very little.
Even though all agents are Bayesian and start with a common prior, they end up with different beliefs based on their network position. Uncertainty over noise can therefore rationalize long-run disagreement between path-connected agents despite extensive communication.\footnote{Even if two agents discuss their beliefs, with any noise in that communication, common knowledge implications will fail to hold allowing for consensus failure.}

\section{Learning from Message Survival}\label{subsection_survival}

Proposition \ref{treeLearning3} shows how even slight uncertainty about relative mutation rates can preclude learning, in the special case of our model in which all messages are equally likely to be transmitted at each step. But what if the content of a message affects its likelihood of being retransmitted; i.e., what if $p_1 \neq p_0$? In that context, receivers can learn not only from the content of received messages but also from how often they have heard about a given issue.
For example, it is often observed that statistically significant results are more likely to be shared and published than insignificant ones. If researchers in an academic field know that many people are working on a  topic, but do not hear of many results, then they may infer
that most results were insignificant, even without paying attention to the content of the studies that were published and shared.

We show in this section that, although agents can learn from message survival in this context (Sections \ref{app:single_chain_survival}-\ref{app:many_chains_survival}), learning from messages received over sufficiently long chains remains impossible so long as agents have any uncertainty about the relative mutation rates (Section \ref{sec:impossibility_general}).
Finally, we consider how well boundedly rational learners who pay attention to either the number of messages they have heard (ignoring the content) or instead simply to the relative fraction of  1s to 0s compare in efficiency to a Bayesian learner who pays attention to both dimensions (Section \ref{sec:bdd_learning}).
Without loss of generality, we focus on the case in which $p_1 > p_0$, meaning that people are more likely to pass along signal 1 than 0.

\subsection{Learning from Survival Along a Single Chain}\label{app:single_chain_survival}

We first characterize learning along a single chain of messages when
learning purely from survival -- where $\mu_{01} = \mu_{10}=\mu$  -- to build intuition as
to how this compares to learning purely from
content (the $p=p_0 = p_1$ case of Lemma \ref{miracle}).

In the case where only message content is informative ($p_0 = p_1$),
the content of a single message becomes nearly meaningless as chains grow long
(due to mutation). In contrast, when $p_0\neq p_1$, message survival continues to be informative
about the true state of the world even as the chain of messages grows long -- although survival
becomes decreasingly likely.

Let
\[ z  \equiv \frac{p_1}{p_0}\left( 1 + (1-2\mu)  \frac{(p_1-p_0)}{ p_0 + \mu(p_1-p_0)}\right).
\]
Note that if $p_1>p_0$ and $0\leq \mu \leq 1/2$, then  $z\geq \frac{p_1}{p_0}>1$.  % is strictly greater than 1.

\begin{lemma}\label{facts}
	Suppose that
	$1 \geq p_1 > p_0 > 0$ and  $\mu_{01}=\mu_{10}=\mu \in (0,1/2]$.\footnote%
	{If $\mu=0$ then $ \frac{Pr(s_t \neq \emptyset | \omega=1 )}{Pr(s_t \neq \emptyset | \omega=0 )}=(p_1/p_0)^t$, which
		diverges, and the problem becomes trivial.  Similarly, if $p_0=0$ then $Pr(s_t \neq \emptyset | \omega=0 ) =0$ and the problem becomes trivial. Note that here we do not require that $\mu<1/2$ since survival at the first step contains information, even if subsequent steps are completely random.  This contrasts with the case in which $p_1=p_0$, in which
		learning is precluded when $\mu=1/2$.}
	\begin{enumerate}
		\item The relative probability of message survival over a chain of
		length $t$ conditional on state 1 versus state 0 equals $\frac{p_1}{p_0}$ at $t=1$ and is uniformly bounded below by $z$ and hence away from $\frac{p_1}{p_0}$ thereafter:
		\begin{equation}\label{survivala}
		\frac{Pr(s_t \neq \emptyset | \omega=1 )}{Pr(s_t \neq \emptyset | \omega=0 )} \geq  z \geq  \frac{p_1}{p_0} \text{ for all } t > 1,
		\end{equation}
		with $z > \frac{p_1}{p_0}$ when $\mu<1/2$.
		\item The ratio in (\ref{survivala}) converges as chain-length grows:  $y\equiv \lim_{t\to \infty} \frac{Pr(s_t \neq \emptyset | \omega=1 )}{Pr(s_t \neq \emptyset | \omega=0 )}$ exists.
		\item Upon seeing a surviving message, the learner's updated belief  $Pr(\omega =1  | s_t \neq \emptyset)$ is uniformly bounded below by $\frac{\theta }{\theta + (1-\theta)/z} >  \theta$ for all $t>1$,  and bounded above in the limit by $\frac{\theta }{\theta + (1-\theta)/y}<1$.\label{c1}
		\item In the limit, updating is entirely due to signal survival
		and not content: $\lim_{t \to \infty} Pr(\omega = 1 | s_t = 1) =
		\lim_{t \to \infty} Pr(\omega = 1 | s_t = 0) = \lim_{t\to \infty}
		Pr(\omega = 1 | s_t \neq \emptyset) $.\label{p2}
	\end{enumerate}
\end{lemma}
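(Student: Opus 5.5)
The plan is to track, for each starting state, the probability that a message survives $t$ steps and the fraction of surviving messages that are 1s, and to write the survival ratio as a telescoping product. As in the footnote to the model, treat nature as ``agent 0'' with $s_0=\omega$. Let $S^\omega_t=\Pr(s_t\neq\emptyset\mid\omega)$ and $q^\omega_t=\Pr(s_t=1\mid s_t\neq\emptyset,\omega)$, so $q^1_0=1$ and $q^0_0=0$. A surviving message at $t$ survives one more step with probability $p_1q+p_0(1-q)$, where $q=q^\omega_t$. Hence $S^\omega_{t+1}=S^\omega_t\,(p_0+(p_1-p_0)q^\omega_t)$, and the content updates by the map
\[
g(q)=\frac{p_1(1-\mu)q+p_0\mu(1-q)}{p_0+(p_1-p_0)q}, \qquad q^\omega_{t+1}=g(q^\omega_t).
\]
Writing $R_t=S^1_t/S^0_t$, this gives
\[
R_t=\prod_{k=0}^{t-1}\frac{p_0+(p_1-p_0)q^1_k}{p_0+(p_1-p_0)q^0_k}.
\]
The $k=0$ factor is $p_1/p_0$, so $R_1=p_1/p_0$.

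\emph{Part 1.} The map $g$ is a M\"obius map, and the sign of its derivative is the sign of $p_1p_0(1-\mu)^2-p_1p_0\mu^2=p_1p_0(1-2\mu)\ge 0$. So $g$ is nondecreasing, strictly increasing when $\mu<1/2$. Induction from $q^1_0>q^0_0$ then gives $q^1_k\ge q^0_k$ for all $k$. Since $p_1>p_0$, every factor in $R_t$ is at least $1$, so $R_t\ge R_2$ for all $t>1$. Using $q^1_1=1-\mu$ and $q^0_1=\mu$, the $k=1$ factor equals
\[
\frac{p_0+(p_1-p_0)(1-\mu)}{p_0+(p_1-p_0)\mu}=1+(1-2\mu)\frac{p_1-p_0}{p_0+\mu(p_1-p_0)}.
\]
Therefore $R_2=z$, which exceeds $p_1/p_0$ exactly when $\mu<1/2$.

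\emph{Part 2.} Since $R_t$ is nondecreasing, it suffices to show it is bounded, which I would do by showing $\sum_k (q^1_k-q^0_k)<\infty$. When $\mu=1/2$ this is trivial, because $q^\omega_k=1/2$ for $k\ge1$. When $\mu\in(0,1/2)$, $q^\omega_k$ is the normalized direction of iterates of the strictly positive matrix $\begin{pmatrix}p_1(1-\mu)&p_0\mu\\ p_1\mu&p_0(1-\mu)\end{pmatrix}$. By Perron--Frobenius, or equivalently because $g$ is a strict contraction of $[0,1]$ in the Hilbert/Birkhoff projective metric, both sequences converge geometrically to the unique fixed point $q^*$ of $g$. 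So $|q^1_k-q^0_k|\le C\lambda^k$ for some $\lambda<1$. Each factor of $R_t$ is then at most $1+C'\lambda^k$, so $\log R_t$ converges and $y=\lim R_t$ exists and is finite. I expect this geometric-rate step to be the only real obstacle. If the projective-metric argument is awkward, the fallback is to diagonalize the $2\times2$ matrix explicitly: the ratio of its two eigenvalues has modulus below $1$, and that gives the rate directly.

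\emph{Parts 3 and 4.} By Bayes' rule, $\Pr(\omega=1\mid s_t\neq\emptyset)=\frac{\theta}{\theta+(1-\theta)/R_t}$, which is increasing in $R_t$. The lower bound for $t>1$ follows from $R_t\ge z$. The limit bound follows from $R_t\to y<\infty$, giving $\frac{\theta}{\theta+(1-\theta)/y}<1$. For Part 4, write
\[
\Pr(\omega=1\mid s_t=1)=\frac{\theta R_t q^1_t}{\theta R_t q^1_t+(1-\theta)q^0_t},
\]
and similarly for $s_t=0$ with $1-q^\omega_t$ in place of $q^\omega_t$. Since $q^1_t,q^0_t\to q^*\in(0,1)$, the ratios $q^1_t/q^0_t$ and $(1-q^1_t)/(1-q^0_t)$ tend to $1$. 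Both posteriors therefore converge to $\frac{\theta y}{\theta y+1-\theta}$, which is the limit of $\Pr(\omega=1\mid s_t\neq\emptyset)$.
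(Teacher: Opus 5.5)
Your proof is correct, and it follows a genuinely different route from the paper's.

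\textbf{The paper's route.} It uses a first-step decomposition. Conditioning on $s_1$ gives $P^t_{1S}=p_1[(1-\mu)P^{t-1}_{1S}+\mu P^{t-1}_{0S}]$ and its state-0 analogue, and Part 1 then follows by induction from $R_{t-1}\ge p_1/p_0$. Part 2 is pieced together from a separate boundedness result (Lemma \ref{ordering_probs}) and a monotonicity argument through an auxiliary sequence $\ell_t$. Part 4 is proved by a coupling that splits the state-0 chain at its first mutation to $1$, plus Perron--Frobenius for the state-1 chain.

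\textbf{Your route.} You decompose at the last step instead. You write $R_t$ as a telescoping product of one-step survival ratios and observe that $q^1_k$ and $q^0_k$ are orbits of the same increasing M\"obius map $g$. That single observation does all the work. Monotonicity of $g$ gives $q^1_k\ge q^0_k$, so every factor is at least $1$; this yields Part 1 and the monotonicity of $R_t$ together. Geometric convergence of both orbits to the common fixed point $q^*$ makes the factors summable (Part 2) and, through Bayes' rule, gives Part 4 with no coupling. You also get slightly more than stated: $R_t\uparrow y$, so the upper bound in Part 3 holds for every $t$, not only in the limit.

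\textbf{The cost.} You need a geometric rate rather than mere convergence. Your eigen-decomposition fallback is the cleaner way to get it. Since $\det=p_0p_1(1-2\mu)>0$, we have $0<\lambda_2<\lambda_1$, and the Perron component of each nonnegative starting vector is positive. The projective-metric version needs one preliminary step: $0$ and $1$ lie at infinite Hilbert distance, and only $g([0,1])=[\mu,1-\mu]$ sits inside the open cone.
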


Unlike the content of a single message, which becomes nearly meaningless as chains grow long
(due to mutation), the information conveyed by a single message's survival does not vanish in the long-chain limit.
For intuition, suppose for a moment that only the first agent in each chain was biased in favor of
message 1, and other agents transmit with probability $\widehat{p}$ regardless of signal content.
The likelihood of survival to $t$ is $p_1 \left( \widehat{p} \right)^{t-1}$ if the first agent saw signal 1 or $p_0 \left( \widehat{p} \right)^{t-1}$ if the first agent saw signal 0. Thus, the relative likelihood of survival equals $p_1/p_0 > 1$ (favoring signal 1) no matter how long the chain.  Moreover, biasing all agents in favor of transmitting message 1 further increases the relative likelihood of survival from state 1, since signal 1 is more likely to be received at each step along the chain when the true state is 1 rather than 0.

\subsection{Learning from Survival Along Many Chains}\label{app:many_chains_survival}

We next consider the challenge of learning for a receiver who only counts messages without
checking what they say. In parallel to the case of learning from signal content only ($p_0 = p_1$), the learner can discern the
state from just signal frequency as long as transmission is more likely after one signal than
the other ($p_0 \neq p_1$), there are sufficiently many starting sources of information, and the learner knows the transmission differences perfectly.

\begin{lemma}\label{learn_survival}
	Consider a learner who is connected to $n(T)$ primary sources at distance $T$. Suppose that $\mu_{01}=\mu_{10}=\mu \in ( 0,1/2]$ and $1>p_1>p_0>0$.\footnote{If $\mu=0$, then it is easy to check that the threshold is an expected degree of $1/p$, which is then the threshold for messages to survive conditional upon state $\omega=1$, which are the more likely to survive.}
	There exists $\lambda(T) = c + o(1)$ for some $c \in (0, 1)$, such that a threshold
	for learning when conditioning only upon signal survival
	is $$\frac{1}{(p_1 \lambda(T) + (1 - \lambda(T))p_0)^T}.$$
\end{lemma}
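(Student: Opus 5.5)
The plan is to reduce survival-only learning to a two-type counting problem. I would then identify the survival probability with the Perron root of a $2\times 2$ positive matrix.

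Write the surviving part of the chain as a sub-stochastic Markov chain on $\{0,1\}$ with matrix
\[
A=\begin{pmatrix} p_0(1-\mu) & p_0\mu \\ p_1\mu & p_1(1-\mu)\end{pmatrix},
\]
so that $q_\omega(T)\equiv \Pr(s_T\neq\emptyset\mid\omega)=e_\omega^\top A^T\mathbf{1}$. Since $\mu\in(0,1/2]$ and $p_0>0$, $A$ has strictly positive entries. Perron--Frobenius then gives a simple dominant eigenvalue $r$ with positive eigenvectors, and a strictly smaller second eigenvalue in modulus. Hence $q_\omega(T)=c_\omega r^T(1+o(1))$ with $c_\omega>0$, and $c_1/c_0=y$ from Lemma \ref{facts}.

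Next I locate $r$. Each row sum of $A$ lies in $[p_0,p_1]$, namely $p_0$ for the first row and $p_1$ for the second. Because the Perron right eigenvector is strictly positive, $r$ is a strictly positive weighted average of the row sums, so $p_0<r<p_1$. Write $r=p_1c+p_0(1-c)$ with $c\in(0,1)$. Now define $\lambda(T)$ by $q_1(T)^{1/T}=p_1\lambda(T)+p_0(1-\lambda(T))$. Since $q_1(T)^{1/T}=r\,(c_1(1+o(1)))^{1/T}\to r$, we get $\lambda(T)=c+o(1)$. With this choice, $\tau(T)=1/q_1(T)$, which is the expression in the statement. I could equally use $q_0$ or $r^{-T}$, since these differ only by bounded factors and the threshold notion is insensitive to constants.

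I then verify the two parts of the threshold definition, using only the survival count $N_T\sim \mathrm{Bin}(n(T),q_\omega(T))$. This count is a sufficient statistic for survival-only learning, because chains are independent.

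\emph{Part (ii).} If $n(T)q_1(T)\to 0$, then $n(T)q_0(T)\to 0$ as well, because $q_0\le q_1$ by Lemma \ref{facts}. By Markov's inequality $\Pr(N_T=0)\to 1$ in both states. On the event $N_T=0$ the posterior is $\theta\,(1-q_1)^{n}/[\theta(1-q_1)^n+(1-\theta)(1-q_0)^n]$, and both factors tend to $1$. So the posterior tends to $\theta$ in probability.

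\emph{Part (i).} If $n(T)q_1(T)\to\infty$, then $n q_0\to\infty$ too, since $q_0\ge q_1/(y+o(1))$. The means $nq_1$ and $nq_0$ differ by the factor $y>1$, where $y\ge z>1$ by Lemma \ref{facts}. Their standard deviations are $O(\sqrt{nq_\omega})=o(nq_\omega)$, so Chebyshev's inequality shows that the threshold rule $N_T\gtrless \tfrac{1+y}{2}\,n q_0$ identifies the state with probability tending to one. A test with vanishing error probabilities under both states forces the Bayesian posterior to converge in probability to $\omega$. Alternatively, one can bound the likelihood ratio directly: $\log$ of the ratio is $N\log(q_1/q_0)+(n-N)\log\frac{1-q_1}{1-q_0}$, and this diverges with the correct sign.

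The main obstacle is pinning down that $\lambda(T)$ converges to a constant strictly inside $(0,1)$. This requires the positivity and spectral-gap argument, plus the non-degeneracy $p_0<r<p_1$. The case $\mu=1/2$ needs a brief separate check: $A$ still has positive entries, but it becomes rank one, so $q_\omega(T)=(\text{row }\omega\text{ sum})\cdot\bar p^{\,T-1}$ with $\bar p=(p_0+p_1)/2$. The conclusion then holds with $c=1/2$. The other point to handle carefully is that the bounded ratio $y$ from Lemma \ref{facts} is what makes the constants irrelevant in both directions.
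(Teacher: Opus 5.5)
Your proposal is correct and follows essentially the same route as the paper: threshold $1/\Pr(s_T\neq\emptyset\mid\omega=1)$, concentration of the survival count (Chebyshev in place of the paper's Chernoff bound) with separation from the limiting ratio $y>1$ of Lemma \ref{facts}, no surviving messages below the threshold, and Perron--Frobenius for the convergence of $\lambda(T)$, where your identification of the limit with the Perron root $r\in(p_0,p_1)$ via $q_1(T)^{1/T}\to r$ usefully sharpens the paper's brief assertion. One small slip: the weighted-average representation $r=(w_0p_0+w_1p_1)/(w_0+w_1)$ comes from the positive \emph{left} Perron eigenvector $w$ (via $w^\top A\mathbf{1}=r\,w^\top\mathbf{1}$), not the right one, though the conclusion $p_0<r<p_1$ is unaffected.
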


Given that messages mutate, the probability that any agent transmits a message lies somewhere between $p_1$ and $p_0$. Conditional on the initial message being 1, the overall probability that a message is transmitted all the way to the end of a length-$T$ chain must therefore take the form ${(p_1 \lambda + (1 - \lambda)p_0)^T}$ for some $\lambda \in (0,1)$. Only if the number of sequences $n(T)$ grows faster than the reciprocal of this survival probability would a growing number of signals survive, conditional on the state being 1. The learner can then discern the state (perfectly in the limit) based on the actual number of signals that survive. As with the case of learning only from content, the threshold $n(T)$ grows exponentially in chain length.

\subsection{Impossibility of Learning from Message Survival with Uncertain and Asymmetric Transmission Rates}\label{sec:impossibility_general}

The identification failure and learning-impossibility finding of Proposition \ref{treeLearning3} persists when agents are able to learn from message survival. As before,
even slight uncertainty about relative mutation rates completely precludes learning from distant sources.
To see why, note that as $T$ grows, only a vanishing fraction of chains survive.
When $p_1\neq p_0$, slightly changing $\mu_{01}/\mu_{10}$ %slightly %%DAVID NOTE: please check my movement of word "slightly", it was ambiguous
changes that fraction by orders of magnitude even though it will still be
vanishing.
This crowds out the information about
the original state that can be gleaned from survival, which dies out
over the sequence.\footnote{In particular,
	to see this (wlog) consider a case in which $1>p_1>p_0>0$.
	Let $Z_{\mu_{01}/\mu_{10}}(T,s)$ be the probability that a signal survives $T$ periods conditional on starting out as a signal $s$, given $\mu_{01}/\mu_{10}$.
	The key observation is that $Z_\pi(t,0)/ Z_{\mu_{01}/\mu_{10}-\varepsilon}(T,1)$ grows without bound as $T$ grows, for any $\varepsilon$.  Both probabilities are tending to 0, but
	eventually they mix. The probabilities are strings of products of combinations of $p_1$s and $p_0$s, and tilting that combination one way or the other eventually accumulates arbitrarily in terms of
	{\sl relative} probabilities as things are exponentiated.  Even a small shift in the fraction of mutations completely overturns the advantage of the starting state.
	Then tiny uncertainty about $\mu_{01}/\mu_{10}$ introduces much larger swings in the survival rates than the starting states.}

We remark that other forms of uncertainty can also hamper learning.
For example, agents may have uncertainty about the average degree in the network.
Even if $p_0 \neq p_1$, as distance to sources grows, survival rates of messages converge across states as messages are increasingly likely to have mutated.  Thus learning becomes increasingly dependent on
knowing the number of messages that \textit{could} have reached the learner. In that case, uncertainty over the network can preclude learning from survival. Our broader message is that when message transmission is noisy, uncertainty about the communication environment can lead to an identification problem and make learning impossible.

\subsection{Alternatives to Full Bayesian Learning}\label{sec:bdd_learning}

Our analysis dispels the mystery of non-consensus by considering the impact of noisy communication alone in an otherwise idealized model (i.e., with Bayesian learners, simple and known networks, and an abundance of communication).
Our results suggest all the more that one should expect failure of consensus in settings with bounded learners and imperfectly known and complex networks. On the other hand, this leaves open the question of what the most important bottleneck to learning might be in such settings.

One obvious possibility is that people may simply not communicate much on certain issues, so beliefs never have a chance to converge. But what is the true bottleneck to learning in the many important situations where communication is abundant? An explanation suggested by our model is that the messages agents receive over social communication channels are sufficiently uninformative that agents correctly update their beliefs relatively little and by different amounts, depending on where they sit in the communication network. An alternative possibility is that agents fail to properly update beliefs despite the informativeness of social communication.

Comparing these possibilities theoretically requires positing a model of bounded learning.
In the Appendix, we consider two forms of simple bounded learning that are natural in the context of our model. First, for any parameters of the communication and noise process, consider a Bayesian learner who faces no uncertainty about those parameters.
We show that, for this Bayesian learner, the number of primary sources needed for learning as distance grows large is the same as for the better of two types of bounded learners who only pay attention to either (i) the average content of the messages received or (ii) the number of messages received. This suggests that imperfect updating need not be a bottleneck for learning in settings like ours with abundant messages.

\section{Concluding Remarks}

We introduced a benchmark model of social learning via relayed signals in the presence of mutations and transmission failures.
We showed that, even with a perfect understanding of the transmission process, learning is challenging in that it requires an exponentially growing
number of original sources as the length of the chains over which information is relayed grows.
Moreover, the slightest uncertainty over relative mutation rates leads to an identification problem that renders learning from distant sources impossible regardless of the number of chains observed.
It also shows that reducing noise and/or reducing uncertainty about the structure of that noise are needed to lead to even the possibility of learning (and thus a consensus).\footnote{The difficulty of learning from distant sources naturally motivates learners to seek out information from closer, trusted contacts, and to down-weight or ignore more distantly-sourced information.
\cite{jackson2022learning} explore policies that can help agents learn from closer sources in another paper.}
Given that models of social learning have traditionally ignored noise in the communication structure, our conclusions suggest that some of the results on social learning be reconsidered in this light.

\bibliographystyle{aer}
\bibliography{noisycommunication}

\section*{Appendix A: Proofs}

\noindent {\bf Proof of Lemma \ref{miracle}:}
We derive the expressions of $X_0,X_1$, which can also be deduced from standard Markov chain results, but it may be useful for the reader to see the derivation.
The proof is by induction. We give the proof for $X_0$, when the state is 0. The proof for $X_1$ is symmetric and the expression for $X$ is a special case.

First, note that if $t=1$ then this expression simplifies to $1-\mu_{01}$, which is exactly the probability that the message has not mutated, and so this holds for $t=1$.

Then for the induction step, supposing that the claimed expression is correct for $t-1$, we show it is correct for $t$.

The probability of matching the true state at $t$ is the probability of not matching at $t-1$ times $\mu_{10}$ plus the probability of matching at $t-1$ times $1-\mu_{01}$,
which by the induction assumption can be written as
\begin{align*}
& \ \ \ \left[1-\frac{\mu_{10} + \mu_{01}M^{t-1}}{\mu_{10} + \mu_{01}}\right] \mu_{10}
+
\left[\frac{\mu_{10} + \mu_{01}M^{t-1}}{\mu_{10} + \mu_{01}}\right](1- \mu_{01})\\
&= \mu_{10} + \left[\frac{-\mu_{10}^2 - \mu_{01} \mu_{10} M^{t-1} + \mu_{10} - \mu_{10}\mu_{01} + \mu_{01}M^{t-1} - \mu_{01}^2M^{t-1}}{\mu_{10} + \mu_{01}}\right]\\
&= \mu_{10} + \left[\frac{-\mu_{10}^2 + \mu_{10} - \mu_{10}\mu_{01} + \mu_{01} M^{t-1}(1 -\mu_{10} - \mu_{01})}{\mu_{10} + \mu_{01}}\right]\\
&= \frac{\mu_{10}  + \mu_{01} M^t}{\mu_{10} + \mu_{01}}.
\end{align*}
as claimed.\eproof

\medskip

\noindent {\bf Proof of  Lemma \ref{multipath1}:}

Consider the case in which the true state is 0, as the other case is analogous.

The expected number of 0 messages is $n(t) p^t X_0(t)$ when the state is 0, while the expected number of  0 messages when the state is 1 is $n(t) p^t (1- X_1(t))$. The  difference in the expected number of 0 messages across states is
\[
D(t) \equiv n(t) p^t X_0(t)-n(t) p^t (1- X_1(t))= n(t) p^t M^t.
\]
If the standard deviation of the number of 0 messages in both states divided by $D(t)$ goes to 0, then by Chebychev's inequality, the probability of seeing more than $n(t) p^t (1- X_1(t)) + \frac{D(t)}{2}$ 0 messages when the state is 1 goes to zero. On the other hand, the probability of seeing fewer than this many 0 messages when the state is 0 goes to zero.

For the direction in which learning obtains, it is therefore enough to show that the ratio of standard deviation to $D(t)$ goes to zero when $n(t)$ grows faster than the threshold. Now, the standard deviation of the number of 0 messages in the 0 state divided by the amount above is
\[
\frac{ ( X_0(1-p^tX_0)n(t) p^t)^{1/2}}{n(t) p^t M^t} =
\frac{ ( X_0(1-p^tX_0))^{1/2}}{(n(t) p^t)^{1/2} M^t}.
\]

Note that the numerator converges to a constant, and so this expression goes to 0 whenever
$(n(t) p^t)^{1/2} M^t$ goes to infinity, which holds exactly when $n(t)$ grows faster than the threshold $\frac{1}{p^tM^{2t}}$.

The expressions for all the other standard deviations and differences are analogous.

For the converse, suppose that $n(t)$ grows slower than the threshold, so that $n(t)p^tM^{2t} \to 0$. Note that the Kullback--Leibler divergence between the distributions of a single chain's message under the two states is $O(p^t M^{2t})$: the message survives with probability $p^t$ under either state, and, conditional on survival, the two state-contingent Bernoulli message distributions have parameters that are bounded away from 0 and 1 and differ by $M^t$. Since the messages arriving along the $n(t)$ chains are independent, the KL divergence between the two state-contingent distributions of the learner's full information is $O(n(t)p^tM^{2t})$, which converges to zero. By Pinsker's inequality, the total variation distance between the two state-contingent distributions then converges to zero as well. The likelihoods are thus asymptotically indistinguishable, and the posterior converges in probability to the prior.\eproof

\medskip

\noindent {\bf Proof of  Proposition \ref{treeLearning3}:}

We give the proof for the case in which $p^t n(t) \rightarrow \infty$.
(With fewer paths there are even fewer signals from which to learn.)

Via a standard calculation (as $t$ grows) for a two-state Markov chain, the steady state limit probability that a surviving message is 1 is
\[
\frac{\mu_{01}}{\mu_{01} + \mu_{10}} \equiv \rho.
\]

We proceed in 2 steps:

First, we derive expressions for the probability of seeing a given set of messages conditional on the state. We use these to show that if
$\rho$ is known then Bayesian posteriors conditional on sufficiently many observations converge to the true state.

%Next, when the relative mutation rates are unknown, we derive expressions for the relative mutation rate that maximizes the likelihood of the data if the state is 1 and the rate that maximizes the likelihood when the state is 0.

Second, we use these expressions to derive the posterior for the case in which $\rho$ is unknown.   We show that regardless of the number of observations, these posteriors converge to the prior as $t$ grows.

%Finally, we argue that these likelihood-maximizing rates converge to each other as T grows large. For large T, the posteriors on relative mutation rate conditional on state concentrate around these points as well, so the states are not distinguishable given the observed signals.

The following straightforward lemma (proof omitted) is useful.

\begin{lemma}
	\label{lembinom}
	Consider a sequence of integer pairs $(k,m)$ such that $k \leq m$, $m\rightarrow \infty$, and $\frac{k}{m}\rightarrow a$.\footnote{More precisely, consider a sequence $\{(k_j,m_j):j-1,2,...\}$ with $k_j \leq m_j$ for all $j$, $\lim_{j \to \infty} m_j = \infty$, and $\lim_{j \to \infty} \frac{k_j}{m_j} = a$. We omit sequence notation in the text to simplify expressions.}
	The maximizer of $z^k (1-z)^{m-k}$ is $z(m,k)=\frac{k}{m}$,
	\[
	\frac{ z(m,k)^k (1-z(m,k))^{m-k}}{z^k (1-z)^{m-k}} \rightarrow \infty
	\]
	for any $z\neq a$, and the size of the above displayed ratio increases with the distance of $z$ from $a$ (as $z^a (1-z)^{1-a}$ is strictly concave).
	Moreover, for any atomless and continuous probability measure $G$ on $z$ that has connected support and includes $a$ in its interior
	\[
	\frac{\int_{a-\varepsilon}^{a+\varepsilon}  z^k (1-z)^{m-k}dG(z) }{\int_0^1  z^k (1-z)^{m-k} dG(z) } \rightarrow 1,
	\]
	for any $\varepsilon>0$.
\end{lemma}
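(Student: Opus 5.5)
The plan is to treat the three claims of Lemma~\ref{lembinom} in sequence. All of them follow from writing the binomial kernel in exponential form,
\[
z^k(1-z)^{m-k}=\exp\bigl(m\,\phi_{k/m}(z)\bigr),\qquad \phi_b(z)\equiv b\log z+(1-b)\log(1-z).
\]

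\textbf{The maximizer.} For fixed $b\in[0,1]$, the function $\phi_b$ is strictly concave on $(0,1)$. Its derivative is $b/z-(1-b)/(1-z)$, which vanishes only at $z=b$. At the boundary cases $b\in\{0,1\}$ the maximum is attained at the corresponding endpoint. This gives $z(m,k)=k/m$ directly.

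\textbf{Divergence of the ratio.} The log of the displayed ratio equals
\[
m\bigl[\phi_{k/m}(k/m)-\phi_{k/m}(z)\bigr].
\]
The bracket is the Kullback--Leibler divergence $D(k/m\,\|\,z)$ between Bernoulli laws. Fix $z\ne a$. As $k/m\to a$, continuity of $D(\cdot\,\|\,z)$ in its first argument gives $D(k/m\|z)\to D(a\|z)>0$; this is valid when $z\in(0,1)$, and if $z\in\{0,1\}$ the ratio is already infinite eventually. Multiplying by $m\to\infty$ yields divergence. Monotonicity in $|z-a|$ follows because $D(a\|z)$ is strictly convex in $z$ and minimized at $z=a$, so it increases as $z$ moves away from $a$ on either side. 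Equivalently, $z^a(1-z)^{1-a}$ is unimodal, which is how the paper phrases it.

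\textbf{Concentration of the integral.} It suffices to show that the mass outside $(a-\varepsilon,a+\varepsilon)$ is negligible relative to the mass inside. Write $b=k/m$; for large $m$ we have $|b-a|<\varepsilon/4$.

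For the outer integral, on $|z-a|\ge\varepsilon$ we have $|z-b|\ge 3\varepsilon/4$. By the monotonicity just established and uniform continuity of $D$ on compacts (taking care near the endpoints, where $D\to\infty$, which only helps),
\[
\phi_b(z)\le\phi_b(b)-\delta
\]
for some $\delta>0$ independent of $m$. Hence the outer integral is at most $e^{m(\phi_b(b)-\delta)}$.

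For the inner integral, use that $a$ lies in the interior of the support and that $G$ has a continuous, strictly positive density there. Then $G$ gives mass at least $c\eta$ to $[b-\eta,b+\eta]\subset(a-\varepsilon,a+\varepsilon)$. On that interval $\phi_b(z)\ge\phi_b(b)-C\eta^2$, with $C$ bounding $|\phi_b''|$ near $a$; this requires $a\in(0,1)$, which is implied by $a$ being interior to a support within $[0,1]$. Choosing $\eta$ small enough that $C\eta^2<\delta/2$, the inner integral is at least $c\eta\,e^{m(\phi_b(b)-\delta/2)}$.

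The ratio of outer to inner mass is therefore at most $e^{-m\delta/2}/(c\eta)\to0$, so the displayed fraction tends to $1$.

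\textbf{Main obstacle.} The delicate step is this last uniformity: obtaining a single $\delta$ that works for all large $m$ even though $b=k/m$ moves. I would handle it by working on the compact set $b\in[a-\varepsilon/4,a+\varepsilon/4]$, $|z-a|\ge\varepsilon$ (with $z$ near $0$ or $1$ absorbed because $\phi_b\to-\infty$ there), and using joint continuity of $D(b\|z)$ on that set.
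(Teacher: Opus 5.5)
Your proof is correct. The paper does not actually prove this lemma: it is labeled straightforward, the proof is omitted, and the only hint is the parenthetical about strict concavity of $z^a(1-z)^{1-a}$. So there is no competing route to compare against. Rewriting the log-ratio as $m\,D(k/m\,\|\,z)$ is the natural way to make that hint rigorous. You also correctly single out the uniform-in-$b$ separation $\delta$, obtained by compactness and lower semicontinuity of $D$, as the one step that needs real care.

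There are two minor tightenings.

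First, you invoke a continuous, strictly positive density of $G$ near $a$. The lemma as stated only assumes an atomless (continuous) measure, which need not have a density. You do not need one. If $|k/m-a|<\eta/2$, then $[b-\eta,b+\eta]\supset[a-\eta/2,a+\eta/2]$, and $G([a-\eta/2,a+\eta/2])>0$ simply because $a$ lies in the support. That positive constant, independent of $m$, is all your inner-integral estimate uses. Connectedness of the support is never used either.

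Second, make the order of choices explicit:
\begin{enumerate}
\item Fix $\delta$ from $\varepsilon$.
\item Pick $\eta$ so that $[a-2\eta,a+2\eta]\subset(0,1)\cap(a-\varepsilon,a+\varepsilon)$ and $C\eta^2<\delta/2$. Here $C$ bounds $|\phi_b''|$ on $[a-2\eta,a+2\eta]$ uniformly in $b\in[0,1]$; this bound is nonincreasing as $\eta$ shrinks.
\item Only then take $m$ large enough that $|k/m-a|<\eta/2$.
\end{enumerate}
Your phrase ``for large $m$ we have $|b-a|<\varepsilon/4$'' does not by itself keep $[b-\eta,b+\eta]$ inside the region where the curvature (and density) bounds hold when $\varepsilon$ is large. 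With the order above, $b\in(0,1)$, so $\phi_b'(b)=0$ and the Taylor bound applies. The outer-to-inner mass ratio is then at most $e^{-m\delta/2}/G([a-\eta/2,a+\eta/2])\to 0$, as you conclude.
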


Conditional on survival, the probability that some sequence starts in state $\omega=1$ and ends in a 1 equals
\[\rho + (1-\rho)M^t.\]

Conditional on survival, the probability that some sequence starts in state $\omega=0$ and ends in a 1 equals
\[\rho(1 - M^t).\]

Similar calculations provide the conditional probabilities associated with ending in a 0.

The chance of observing $k$ 1s conditional on $m$ sequences reaching the receiver and on the starting state being $\omega=1$  is then
\[
P_{k,m,t,\rho}(1) =
\binom{m}{k} \left[\rho + (1-\rho)M^t\right]^k   \left[(1-\rho) (1-M^t)\right]^{m-k}.
\]
Analogously,
the chance of observing $k$ 1s conditional on $m$ sequences reaching the receiver and on the starting state being $\omega=0$  is
\[
P_{k,m,t,\rho}(0) =
\binom{m}{k} \left[\rho(1 - M^t)\right]^k   \left[(1-\rho) + \rho M^t\right]^{m-k}.
\]

First consider the case where $\rho$ is known.

Suppose the state is 1 (the argument for the case where the state is 0 is analogous). As the number of signals grows large (keeping $t$ fixed), $\frac{k}{m-k} \to \frac{\mu_{01} + \mu_{10}M^t}{\mu_{10} - \mu_{10}M^t} \equiv a_{t,1}$ in probability. Suppose without loss of generality that $\mu_{01}\geq \mu_{10}$ so $a_{t,1}>1$ for all $t$.

The Bayesian's posterior that the state is $\omega=1$ conditional upon seeing $k$ 1's out of $m$ sequences that reached the receiver is
\[
\frac{\theta P_{k,m,t, \rho}(1)}{\theta P_{k,m,t, \rho}(1)+ (1-\theta) P_{k,m,t, \rho}(0)}.
\]
Let $k_n$ and $m_n$ be the random number of 1's and messages received respectively with $n$ length $t$ chains. By Lemma \ref{lembinom}, $\frac{P_{k_n,m_n,t, \rho}(0)}{P_{k_n,m_n,t, \rho}(1)} \rightarrow 0$ in probability as the number of signals $n$ grow large. It follows that
\[
\frac{\theta P_{k_n,m_n,t, \rho}(1)}{\theta P_{k_n,m_n,t, \rho}(1)+ (1-\theta) P_{k_n,m_n,t, \rho}(0)}\rightarrow 1,
\]
and so an agent who knows $\rho$ can learn the true state with sufficiently many signals for any given path-length $t$.
Therefore, since the agent can learn the true state with sufficiently many signals for any given $t$, it follows that the agent can learn the true state as $t\to \infty$ if $n(t)$ grows quickly enough.

Next, to conclude the proof we consider the case when $\rho$ is unknown but follows a distribution $F$ with connected support that admits a continuous and strictly positive density $f$ on that support.
We show that the Bayesian's posterior converges to $\theta$ regardless of the state as $t$ becomes large, regardless of $n(t)$.

A Bayesian's posterior that the state is $\omega=1$ conditional upon seeing $k$ 1's out of $m$ sequences that reached the receiver is
\[
\frac{\theta \int_{\rho} P_{k,m,t,\rho }(1)dF(\rho)}{\theta \int_{\rho} P_{k,m,t,\rho}(1)dF(\rho)+ (1-\theta) \int_{\rho} P_{k,m,t,\rho }(0)dF(\rho)},
\]
To complete the proof, it suffices to show that
$ \int_{\rho} P_{k,m,t,\rho }(1)dF(\rho) / \int_{\rho} P_{k,m,t,\rho}(0)dF(\rho)$ converges to one in probability.

Given a true ${\mu_{01}}^*, {\mu_{10}^*}$ such that $\rho^* = \frac{\mu_{01}^*}{\mu_{01}^* + \mu_{10}^*}$ lies in the interior of the support of $F$, and conditioning on the state being $\omega = 1$ (the case $\omega = 0$ is analogous), the realized $k,m$ will be such that
$
\frac{k}{m-k} - \frac{\mu_{01}^* + \mu_{10}^*M^t}{\mu_{10}^* - \mu_{10}^*M^t} = \frac{k}{m-k} -\frac{\rho^* + (1-\rho^*)M^t}{(1-\rho^*) - (1-\rho^*)M^t}
$ converges to 0 in probability,
and $\frac{\rho^* + (1-\rho^*)M^t}{(1-\rho^*) - (1-\rho^*)M^t} \rightarrow \frac{\rho^* }{ 1-\rho^*} =a$.

By the first part of Lemma \ref{lembinom}, for any $k,m$,
$P_{k,m,t,\rho}(1)$ is maximized when $\rho$ equals $\rho(t, k, m, 1)$ such that
\[
\rho(t, k, m, 1) + (1 - \rho(t, k, m, 1))M^t = \frac{k}{m},
\]
and $P_{k,m,t,\rho}(0)$ is maximized when $\rho$ equals $\rho(t, k, m, 0)$ such that
\[\rho(t, k, m, 0) (1- M^t) = \frac{k}{m}.
\]

$\rho(t, k, m, 1)$ and $\rho(t, k, m, 0)$ converge to each other and to $\rho^*$ in probability. It therefore follows from Lemma \ref{lembinom} that
\[
plim \ \frac{\int_{\rho} P_{k,m,t,\rho}(1)dF(\rho)}{ \int_{\rho} P_{k,m,t,\rho}(0)dF(\rho)} = plim \
\frac{\int_{\rho(t, k, m, 1)-\varepsilon}^{\rho(t, k, m, 1)+\varepsilon} P_{k,m,t,\rho}(1)dF(\rho)} {\int_{\rho(t, k, m, 0)-\varepsilon}^{\rho(t, k, m, 0)+\varepsilon} P_{k,m,t,\rho}(0)dF(\rho)}
\]
for any $\varepsilon > 0$.

Letting $[l,h]$ be the support of $\rho$, note that the two state-contingent probabilities of a surviving message being a 1, $\rho + (1-\rho)M^t$ and $\rho(1 - M^t)$, are both affine in $\rho$ with slope $1-M^t$ and differ by $M^t$. Hence
\[P_{k,m,t,\rho}(1) = P_{k,m,t,\rho + \frac{M^t}{1-M^t}}(0)\]
for all $\rho$; in particular, $\rho(t, k, m, 0) - \rho(t, k, m, 1) = \frac{M^t}{1-M^t}$, and

\[P_{k,m,t,\rho(t, k, m, 1)+\delta}(1) = P_{k,m,t,\rho(t, k, m, 0)+\delta}(0)\]
for any $\delta \in \mathbb{R}$ such that both $\rho(t, k, m, 1)+\delta$ and $\rho(t, k, m, 0)+\delta$ fall in $(l, h)$. In particular, if we let $\varepsilon_t = \frac{1}{2}\min\{\rho(t, k, m, 1) - l, h - \rho(t, k, m, 0)\}$ and if $\varepsilon_t > 0$, the intervals $[\rho(t, k, m, 1) - \varepsilon_t, \rho(t, k, m, 1) + \varepsilon_t]$ and $[\rho(t, k, m, 0) - \varepsilon_t, \rho(t, k, m, 0) + \varepsilon_t]$ strictly lie in $(l, h)$. So by the earlier observation, applying the change of variables $\rho \mapsto \rho + \rho(t, k, m, 0) - \rho(t, k, m, 1)$,
\[\int_{\rho(t, k, m, 0)-\varepsilon_t}^{\rho(t, k, m, 0)+\varepsilon_t} P_{k,m,t,\rho}(0)f(\rho)d\rho = \int_{\rho(t, k, m, 1)-\varepsilon_t}^{\rho(t, k, m, 1)+\varepsilon_t} P_{k,m,t,\rho}(1)f\big(\rho + \rho(t, k, m, 0) - \rho(t, k, m, 1)\big)d\rho.\]
The right-hand side differs from $\int_{\rho(t, k, m, 1)-\varepsilon_t}^{\rho(t, k, m, 1)+\varepsilon_t} P_{k,m,t,\rho}(1)f(\rho)d\rho$ only in that the density is evaluated at points shifted by $\rho(t, k, m, 0) - \rho(t, k, m, 1) = \frac{M^t}{1-M^t}$, which converges to 0. Since $f$ is continuous and strictly positive on the compact support $[l,h]$ (so $f$ is uniformly continuous and bounded away from zero there), the ratio of the two integrals converges to 1 in probability:
\[plim \ \frac{\int_{\rho(t, k, m, 1)-\varepsilon_t}^{\rho(t, k, m, 1)+\varepsilon_t} P_{k,m,t,\rho}(1)dF(\rho)} {\int_{\rho(t, k, m, 0)-\varepsilon_t}^{\rho(t, k, m, 0)+\varepsilon_t} P_{k,m,t,\rho}(0)dF(\rho)}=1.\]
Moreover $plim \ \varepsilon_t = \frac{1}{2}\min\{\rho^* - l, h - \rho^*\} > 0$. Therefore,
\[plim \ \frac{\int_{\rho} P_{k,m,t,\rho}(1)dF(\rho)}{ \int_{\rho} P_{k,m,t,\rho}(0)dF(\rho)} = plim \
\frac{\int_{\rho(t, k, m, 1)-\varepsilon_t}^{\rho(t, k, m, 1)+\varepsilon_t} P_{k,m,t,\rho}(1)dF(\rho)} {\int_{\rho(t, k, m, 0)-\varepsilon_t}^{\rho(t, k, m, 0)+\varepsilon_t} P_{k,m,t,\rho}(0)dF(\rho)}=1,\]
which concludes the proof.\eproof

\medskip

\noindent {\bf Proof of Lemma \ref{facts}, Part 1:}

For ease of notation,
let $P^t_{1S} \equiv Pr(s_t \neq \emptyset | \omega=1 )$
and $P^t_{0S} \equiv Pr(s_t \neq \emptyset | \omega=0 )$.
These are the probabilities of signal survival to time $t$ conditional on the first signal.

First we prove that
$\frac{P^t_{1S}}{ P^t_{0S}}\geq \frac{p_1}{p_0} $, with strict inequality when $\mu<1/2$ and $t>1$.

This is proven by induction.
First, $P^1_{1S} = p_1 > p_0=P^1_{0S} $.
Next, let us show that  $\frac{P^t_{1S}}{ P^t_{0S}}\geq \frac{p_1}{p_0}$ given that $P^{t-1}_{1S} >P^{t-1}_{0S} $.
Note that given the stationarity of the process, $P^{t-1}_{1S} = Pr(s_t \neq \emptyset | s_1=1)$ and $P^{t-1}_{0S} = Pr(s_t \neq \emptyset | s_1=0)$, and
then we can write\footnote{The starting state $s_0$ is 1 in this calculation and so then there is a probability $p$ that the signal survives to the first period,
	and then the calculation inside the $\left[ \cdot\right]$ handles the two possible values of the first period signal and then the probability
	the signal survives to $t$ if it has made it to the first period in the two possible values it could have in the first period.}
The first part
$$P^t_{1S} = p_1 \left[(1-\mu) Pr(s_t \neq \emptyset | s_1=1) + \mu  Pr(s_t \neq \emptyset | s_1=0)\right],$$
and so then it follows that
$$P^t_{1S} = p_1 (1-\mu) P^{t-1}_{1S} + p_1\mu  P^{t-1}_{0S}.$$
Then by the inductive step ($P^{t-1}_{1S} >P^{t-1}_{0S} $) and so it follows that
$$P^t_{1S} \geq p_1 (1-\mu) P^{t-1}_{0S} + p_1\mu  P^{t-1}_{1S},$$
with strict inequality when $\mu< 1/2$ and $t>1$.
Similarly,
$$P^t_{0S} =p_0 (1-\mu) P^{t-1}_{0S} + p_0\mu  P^{t-1}_{1S}.$$
Therefore
$$\frac{P^t_{1S}}{ P^t_{0S}} \geq \left(\frac{p_1}{p_0}\right)\frac{ (1-\mu) P^{t-1}_{0S} + \mu  P^{t-1}_{1S}}{  (1-\mu) P^{t-1}_{0S} + \mu  P^{t-1}_{1S}}=\frac{p_1}{p_0},$$
with strict inequality when $\mu< 1/2$ and $t>1$,
as claimed.

Now we complete the proof of the first part of the lemma.
Note that (from above)
$$\frac{P^t_{1S}}{ P^t_{0S}} = \left(\frac{p_1}{p_0}\right)\frac{ (1-\mu) P^{t-1}_{1S} + \mu  P^{t-1}_{0S}}{  (1-\mu) P^{t-1}_{0S} + \mu  P^{t-1}_{1S}}.$$
Therefore,
$$\frac{P^t_{1S}}{ P^t_{0S}} = \left(\frac{p_1}{p_0}\right)\left(\frac{ (1-\mu) P^{t-1}_{0S} + \mu  P^{t-1}_{1S} + (1-2\mu)(P^{t-1}_{1S}-P^{t-1}_{0S})}{  (1-\mu) P^{t-1}_{0S} + \mu  P^{t-1}_{1S}}\right),$$
and then since $p_1>p_0$ and  $\frac{P^{t-1}_{1S}}{ P^{t-1}_{0S}}\geq \frac{p_1}{p_0} $, with strict inequality when $\mu< 1/2$ and $t>1$, it follows that
$$\frac{P^t_{1S}}{ P^t_{0S}}=
\left(\frac{p_1}{p_0}\right)\left(1+ (1-2\mu)\frac{P^{t-1}_{1S}-P^{t-1}_{0S}}{ P^{t-1}_{0S}+ \mu(P^{t-1}_{1S}-P^{t-1}_{0S})}\right) \geq \frac{p_1}{p_0}\left( 1 +  (1-2\mu) \frac{(p_1-p_0)}{ p_0 + \mu(p_1-p_0)}\right),$$
for all $t> 1$, with equality at $t=2$ (and it directly follows that this expression ($z$) is strictly larger than $p_1/p_0$ when $\mu<1/2$),
as claimed.\eproof

\medskip

The following result is useful in the proofs of the remaining parts of Lemma \ref{facts}.

\begin{lemma}\label{ordering_probs}
	Fix $\theta \in (0, 1), \mu_{01} = \mu_{10} = \mu \in (0, 1/2], 0 < p_0 \leq p_1 \leq 1$. For all $t>0$,
	\[Pr(s_t = 1 | \omega = 1) \geq Pr(s_t = 0 | \omega = 1).\]
	Moreover,  either there exists $T$ large enough such that
	\[
	Pr(s_t = 1 | \omega = 0) \geq Pr(s_t = 0 | \omega = 0)
	\text{ \ for all \ } \text{ for all } t\geq T,
	\]
	or
	\[
	Pr(s_t = 1 | \omega = 0) < Pr(s_t = 0 | \omega = 0) \text{ for all } t.
	\]
	Finally, the sequence
	\[ \frac{Pr(s_t = 1 | \omega = 1)}{\min\{Pr(s_t = 1 | \omega = 0), Pr(s_t = 0 | \omega = 0)\}}\]
	is bounded above.
\end{lemma}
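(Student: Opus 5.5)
We prove the three claims in turn. Throughout we use the linear recursion that the model imposes on the state-contingent probabilities of the two non-null messages. Fix the state $\omega$ and write $a_t \equiv Pr(s_t = 1|\omega)$ and $b_t \equiv Pr(s_t=0|\omega)$. Take $(a_0,b_0)=(1,0)$ if $\omega=1$ and $(0,1)$ if $\omega=0$, treating nature as ``agent 0'' as in the stationarity footnote. With $\mu_{01}=\mu_{10}=\mu$, the model gives
\begin{align*}
a_t &= p_1(1-\mu)\,a_{t-1} + p_0\mu\, b_{t-1},\\
b_t &= p_1\mu\, a_{t-1} + p_0(1-\mu)\, b_{t-1},
\end{align*}
that is, $v_t = A v_{t-1}$ with $v_t=(a_t,b_t)^\top$ and
\[
A=\begin{pmatrix} p_1(1-\mu) & p_0\mu\\ p_1\mu & p_0(1-\mu)\end{pmatrix}.
\]
Because $\mu>0$ and $p_0,p_1>0$, the matrix $A$ is strictly positive. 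Hence $a_t,b_t>0$ for all $t\geq 1$ in either state.

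\emph{First and second claims.} The key identity is
\[
a_t-b_t=(1-2\mu)\,(p_1a_{t-1}-p_0b_{t-1}).
\]
If $a_{t-1}\geq b_{t-1}\geq 0$, then $p_1a_{t-1}\geq p_0 a_{t-1}\geq p_0 b_{t-1}$, because $p_1\geq p_0$. Since $1-2\mu\geq 0$, this gives $a_t\geq b_t$. So the property $a_t\geq b_t$ is inherited from one period to the next.

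For $\omega=1$ the property holds at $t=0$, since $(a_0,b_0)=(1,0)$. Induction then yields the first claim for all $t>0$.

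For $\omega=0$ we use the same inheritance. If $a_T\geq b_T$ for some $T$, then $a_t\geq b_t$ for every $t\geq T$. If no such $T$ exists, then $a_t<b_t$ for all $t$. This is exactly the dichotomy in the second claim.

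\emph{Third claim.} This is the only step needing more than induction, and the plan is to use Perron--Frobenius. Since $A$ is a strictly positive $2\times 2$ matrix, it has a simple dominant eigenvalue $\lambda>0$ with a strictly positive eigenvector $u=(u_1,u_2)$. The other eigenvalue $\lambda'$ satisfies $|\lambda'|<\lambda$. At $\mu=1/2$ the matrix has rank one, so $\lambda'=0$ and the argument is only easier.

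Any nonnegative, nonzero initial vector has a strictly positive component along $u$. This is because the left Perron vector is also strictly positive. Consequently, in each state $\omega$,
\[
v_t = c_\omega \lambda^t u + O(|\lambda'|^t), \qquad c_\omega>0.
\]

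It follows that the sequence in the statement converges:
\[
\frac{Pr(s_t=1|\omega=1)}{\min\{Pr(s_t=1|\omega=0),Pr(s_t=0|\omega=0)\}}
\;\longrightarrow\; \frac{c_1 u_1}{c_0\min\{u_1,u_2\}}<\infty .
\]
Every term of the sequence is finite for $t\geq 1$, because the entries of $v_t$ are positive. A convergent sequence of finite terms is bounded, which gives the claim.

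The main point needing care is verifying that $c_0>0$ when starting from $(0,1)$. I would check this directly: the left eigenvector of a positive matrix has strictly positive entries, so its inner product with $(0,1)$ is positive. Alternatively, one can note that $A$ is primitive, so $A^t/\lambda^t$ converges to a strictly positive rank-one matrix.
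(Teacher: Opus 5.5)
Your proof is correct. For the first two claims it matches the paper's argument: induction on the one-step recursion, starting from nature's signal. Your identity $a_t-b_t=(1-2\mu)(p_1a_{t-1}-p_0b_{t-1})$ spells out what the paper leaves to ``the inductive hypothesis and the facts that $p_1\ge p_0$ and $\mu\le 1/2$.''

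The third claim is where your route genuinely differs. The paper stays elementary. It uses one step of the recursion and the first claim to bound $Pr(s_t=1|\omega=1)/Pr(s_t=0|\omega=0)$ by $\frac{p_1(1-\mu)+p_0\mu}{p_1\mu}$ times $Pr(s_{t-1}=1|\omega=1)/Pr(s_{t-1}=1|\omega=0)$. It then bounds ratios of the latter form by $1/(p_0\mu)$. This uses the path estimate $Pr(s_t=1|\omega=0)\ge p_0\mu\, Pr(s_{t-1}=1|\omega=1)$ (mutate to 1 at the first step, then run as a chain started from 1), together with $Pr(s_t=1|\omega=1)\le Pr(s_{t-1}=1|\omega=1)$. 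The result is an explicit bound depending only on $(p_0,p_1,\mu)$. As written, the paper's monotonicity step replaces $Pr(s_{t-1}=1|\omega=0)$ by $Pr(s_{t-1}=1|\omega=1)$, which is not literally the first claim. It is simpler to note that in state 1, $a_t\le (p_1(1-\mu)+p_0\mu)\,a_{t-1}\le a_{t-1}$, using $b_{t-1}\le a_{t-1}$.

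Your Perron--Frobenius expansion $v_t=c_\omega\lambda^t u+O(|\lambda'|^t)$ with $c_\omega>0$ avoids the case split and these auxiliary comparisons. It also proves more: the ratio converges, to $c_1u_1/(c_0\min\{u_1,u_2\})$. The cost is that the bound is not explicit, and for early $t$ it rests only on finiteness of the terms.

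The same expansion would also give two later results at once:
\begin{itemize}
\item Part 2 of Lemma~\ref{facts}: the limiting survival ratio exists and equals $c_1/c_0$.
\item Part 4: since $v_t/(a_t+b_t)\to u/(u_1+u_2)$ in both states, the likelihood ratios for $s_t=1$, $s_t=0$ and survival all tend to $c_1/c_0$.
\end{itemize}
The paper proves those parts by separate, longer arguments and uses Perron--Frobenius only for the limit of $Pr(s_t=1|s_t\neq\emptyset)$.
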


\noindent {\bf Proof of Lemma \ref{ordering_probs}:}

The first claim is proven by induction:

Since $\mu \leq 1/2$, $Pr(s_1 = 1 | \omega = 1) \geq Pr(s_1 = 0 | \omega = 1)$. Suppose $Pr(s_t = 1 | \omega = 1) \geq Pr(s_t = 0 | \omega = 1)$.
Note that,
\begin{align*}
Pr(s_{t+1} = 1 | \omega = 1) &= p_1 (1 - \mu) Pr(s_t = 1 | \omega = 1) + p_0 \mu Pr(s_t = 0 | \omega = 1) \\
Pr(s_{t+1} = 0 | \omega = 1) &= p_1 \mu Pr(s_t = 1 | \omega = 1) + p_0(1 -  \mu) Pr(s_t = 0 | \omega = 1).
\end{align*}
The result then follows from the inductive hypothesis and the facts that $p_1 \geq p_0$ and $\mu \leq 1/2$.

Next, to show the second claim in the lemma, note that
\begin{align*}
Pr(s_{t+1} = 1 | \omega = 0) &= p_1 (1 - \mu) Pr(s_t = 1 | \omega = 0) + p_0 \mu Pr(s_t = 0 | \omega = 0) \\
Pr(s_{t+1} = 0 | \omega = 0) &= p_1 \mu Pr(s_t = 1 | \omega = 0) + p_0(1 -  \mu) Pr(s_t = 0 | \omega = 0).
\end{align*}
Then if $Pr(s_t = 1 | \omega = 0) \geq Pr(s_t = 0 | \omega = 0)$ for some $t=T$, the same will hold for all $t>T$ by a similar inductive proof.
Otherwise $Pr(s_t = 1 | \omega = 0) < Pr(s_t = 0 | \omega = 0)$ for all $t$, and then the result holds directly.

Finally, we show the third part of the claim.
By the second part of this lemma, there are two cases to consider. If  $Pr(s_t = 1 | \omega = 0) < Pr(s_t = 0 | \omega = 0)$ for all $t$. Then
\[
\frac{Pr(s_t = 1 | \omega = 1)}{\min\{Pr(s_t = 1 | \omega = 0), Pr(s_t = 0 | \omega = 0)\}} =
\frac{Pr(s_t = 1 | \omega = 1)}{Pr(s_t = 1 | \omega = 0)}
\]
If instead there is a $T$ such that for all $t \geq T$,  $Pr(s_t = 1 | \omega = 0) \geq Pr(s_t = 0 | \omega = 0)$, then
\begin{align*}
\frac{Pr(s_t = 1 | \omega = 1)}{\min\{Pr(s_t = 1 | \omega = 0), Pr(s_t = 0 | \omega = 0)\}} &=
\frac{Pr(s_t = 1 | \omega = 1)}{Pr(s_t = 0 | \omega = 0)} \\
&= \frac{p_1(1-\mu) Pr(s_{t-1} = 1 | \omega=1) + p_0\mu Pr(s_{t-1}=0 | \omega=1)}{p_1\mu Pr(s_{t-1} = 1 | \omega=0) + p_0(1-\mu) Pr(s_{t-1}=0 | \omega=0)} \\
&\leq \frac{(p_1(1-\mu) + p_0\mu) Pr(s_{t-1} = 1 | \omega=1)}{p_1\mu Pr(s_{t-1} = 1 | \omega=0) + p_0(1-\mu) Pr(s_{t-1}=0 | \omega=0)} \\
&< \frac{p_1(1-\mu) + p_0\mu}{p_1\mu}\frac{Pr(s_{t-1} = 1 | \omega = 1)}{Pr(s_{t-1} = 1 | \omega = 0)},
\end{align*}
where the second to last inequality uses the first part of this lemma. We can therefore handle both cases simultaneously by showing that the sequence $\frac{Pr(s_{t} = 1 | \omega = 1)}{Pr(s_{t} = 1 | \omega = 0)}$ is bounded above.

To that end, note that
\begin{align*}
Pr(s_t =1|\omega=0) &\geq Pr(s_t = 1 | \omega=0, s_1 = 1) Pr(s_1 = 1|\omega=0) \\
&= Pr(s_{t-1} = 1 | \omega=1) p_0\mu.
\end{align*}
So,
\[
\frac{Pr(s_{t} = 1 | \omega = 1)}{Pr(s_{t} = 1 | \omega = 0)} \leq \frac{Pr(s_{t} = 1 | \omega = 1)}{Pr(s_{t-1} = 1 | \omega = 1)}\frac{1}{p_0\mu}.
\]
It then suffices to show that ${Pr(s_{t} = 1 | \omega = 1)}\leq {Pr(s_{t-1} = 1 | \omega = 1)}$, since then
from above
\[
\frac{Pr(s_{t} = 1 | \omega = 1)}{Pr(s_{t} = 1 | \omega = 0)} \leq \frac{1}{p_0\mu},
\]
which is finite given that $p_0>0$ and $\mu>0$.
To see that  ${Pr(s_{t} = 1 | \omega = 1)}\leq {Pr(s_{t-1} = 1 | \omega = 1)}$,
\begin{align*}
Pr(s_{t} = 1 | \omega = 1)
&= p_1 (1 - \mu) Pr(s_{t} = 1 | s_1 = 1) + p_1\mu Pr(s_{t} = 1 |s_1 = 0)\\
&= p_1 (1 - \mu) Pr(s_{t-1} = 1 | \omega = 1) + p_1\mu Pr(s_{t-1} = 1 | \omega = 0)\\
&\leq  p_1 (1 - \mu) Pr(s_{t-1} = 1 | \omega = 1) + p_1 \mu Pr(s_{t-1} = 1 | \omega = 1)\\
&=  p_1  Pr(s_{t-1} = 1 | \omega = 1),
\end{align*}
where the inequality follows from the first part of the lemma, establishing the claim.\eproof

\medskip

\noindent {\bf Proof of Lemma \ref{facts}, Part 2:}

We show that $\lim_{t \to \infty}\frac{P^t_{1S}}{ P^t_{0S}} = \lim_{t \to \infty} \frac{p_1 Pr(s_{t-1} = 1 | \omega=1) + p_0 Pr(s_{t-1} = 0 | \omega=1)}{p_1 Pr(s_{t-1} = 1 | \omega=0) + p_0 Pr(s_{t-1} = 0 | \omega=0)}$ exists.

The sequence is bounded above by the first and last part of Lemma \ref{ordering_probs}:
it is bounded above by either $\frac{Pr(s_{t} = 1 | \omega=1)}{Pr(s_{t} = 1 | \omega=0)}$
or $\frac{Pr(s_{t} = 1 | \omega=1)}{Pr(s_{t} = 0 | \omega=0)}$, both of which are bounded above.
Furthermore, the sequence is bounded below by the first part of Lemma \ref{facts}.

To complete the proof that the limit exists, we show that the sequence is monotone. For this, we will start by writing, $r_t$, the $t^{th}$ term in the sequence, as $\frac{Pr(s_{t-1} = 1 | \omega=1) + \ell_1 Pr(s_{t-1} = 0 | \omega=1)}{ Pr(s_{t-1} = 1 | \omega=0) + \ell_1 Pr(s_{t-1} = 0 | \omega=0)}$, where $\ell_1 = p_0/p_1$. Now the $t+1^{st}$ is
\begin{align*}
r_{t+1} &= \frac{Pr(s_{t} = 1 | \omega=1) + \ell_1 Pr(s_{t} = 0 | \omega=1)}{ Pr(s_{t} = 1 | \omega=0) + \ell_1 Pr(s_{t} = 0 | \omega=0)} \\
&= \frac{(p_1(1-\mu) + \ell_1 p_1 \mu) Pr(s_{t} = 1 | s_1=1)  +  (p_0 \mu + \ell_1 p_0(1 - \mu))Pr(s_{t} = 0 | s_1=1)}{(p_1(1-\mu) +  \ell_1 p_1\mu) Pr(s_{t} = 1 | s_1=0) + (p_0\mu + \ell_1 p_0(1 - \mu))Pr(s_{t} = 0 | s_1=0)}\\
&=  \frac{Pr(s_{t-1} = 1 | \omega=1) + \ell_2 Pr(s_{t-1} = 0 | \omega=1)}{Pr(s_{t-1} = 1 | \omega=0) + \ell_2 Pr(s_{t-1} = 0 | \omega=0)},
\end{align*}
where $\ell_2 = \frac{p_0}{p_1}\frac{\mu + \ell_1 (1 - \mu)}{(1-\mu) + \ell_1 \mu}$. Consider the sequence $\ell_t$, where $\ell_{t+1} = \frac{p_0}{p_1}\frac{\mu + \ell_t (1 - \mu)}{(1-\mu) + \ell_t \mu}$ and $\ell_1 = \frac{p_0}{p_1}$.
Note that $\ell_t$ is non-increasing in $t$ given that $\mu\leq 1/2$ and it is strictly decreasing when $\mu<1/2$.
Iterating
on the above logic
\[r_t = \frac{Pr(s_{1} = 1 | \omega=1) + \ell_{t-1} Pr(s_{1} = 0 | \omega=1)}{ Pr(s_{1} = 1 | \omega=0) + \ell_{t-1} Pr(s_{1} = 0 | \omega=0)}.\]
To see that $r_t$ is monotone in $t$, note that the sign of the derivative of $r_t$ with respect to $\ell_t$ only
depends on the sign of
$Pr(s_{1} = 0 | \omega=1)Pr(s_{1} = 1 | \omega=0) - Pr(s_{1} = 1 | \omega=1)Pr(s_{1} = 0 | \omega=0)$,
and so it is monotone given the monotonicity of $\ell_t$ in $t$.\eproof

\medskip

\noindent {\bf Proof of Lemma \ref{facts}, Part 3:}

That $Pr(\omega = 1| s_t \neq \emptyset) \geq \frac{\theta z}{1 + \theta(z - 1)}$ for any $t>1$, with strict inequality when $\mu<1/2$, follows
from Part 1 and Bayes rule (and it is evident from the proof that this lower bound is not
tight). Therefore, it remains to show that $\lim_{t \to \infty} Pr(\omega = 1| s_t \neq \emptyset)$
exists, a step which is deferred to the proof of Part 4.

The fact that $\lim_{t\to \infty} Pr(\omega =1  | s_t \neq \emptyset) = \frac{\theta }{\theta + (1-\theta)/y}<1$ follows from Part 2 and Bayes' Rule.\eproof

\medskip

\noindent {\bf Proof of Lemma \ref{facts}, Part 4:}

It suffices to show that $\lim_{t \to \infty} Pr(\omega = 1| s_t = 1) = \lim_{t \to \infty} Pr(\omega = 1| s_t = 0)$, as this implies that $\lim_{t \to \infty} Pr(\omega = 1| s_t \neq \emptyset)$ exists and has the same value. This limiting equality between posterior distributions can equivalently be expressed in terms of likelihood ratios:
\begin{align*}
\lim_{t \to \infty} \frac{Pr(s_t=1 | \omega=0)}{Pr(s_t=1 | \omega=1)} &= \lim_{t \to \infty} \frac{Pr(s_t=0 | \omega=0)}{Pr(s_t=0 | \omega=1)} \\
\iff \lim_{t \to \infty} \frac{Pr(s_t=1 | s_t \neq \emptyset, \omega=0)}{Pr(s_t=1 | s_t \neq \emptyset, \omega=1)} &= \lim_{t \to \infty} \frac{Pr(s_t=0 | s_t \neq \emptyset, \omega=0)}{Pr(s_t=0 | s_t \neq \emptyset, \omega=1)}.\addtocounter{equation}{1}\tag{\theequation} \label{nts}
\end{align*}
We show that
\begin{equation}\label{nts1}
\lim_{t \to \infty} Pr(s_t=1 | s_t \neq \emptyset, \omega=0) = \lim_{t \to \infty} Pr(s_t=1 | s_t \neq \emptyset, \omega=1),
\end{equation}
since this implies that both sides of equation \ref{nts} are equal to 1.\footnote{Subtract each side of equation \ref{nts1} from 1 before taking ratios to see that the right side of equation \ref{nts} is also 1.}

Denote by $S$ a sequence of signals that evolve according to our process, starting with $S_0=1$ and $S'$ another (independent) sequence of signals with $S'_0=0$. Let $\tau = \min \{t | S'_t = 1\}$, where $\tau = \infty$ if $S'$ is dropped at some step before mutating to signal $1$, or if $S'_t = 0$ for all $t$.

In this notation, equation \ref{nts1} can equivalently be expressed as:  $\lim_{t \to \infty} Pr(S_t = 1 | S_t \neq \emptyset) = \lim_{t \to \infty} Pr(S'_t = 1 | S'_t \neq \emptyset)$.
Note the following relationship between the two independent paths:\footnote{Note that if $\tau>t$, then the probability that $S'_t=1$ is 0.}
\begin{align*}
Pr(S'_t = 1 | S'_t \neq \emptyset) &= \sum_{i=1}^{t} Pr(S'_t = 1 | S'_t \neq \emptyset, \tau = i) Pr(\tau = i | S'_t \neq \emptyset) \\
&=  \sum_{i=1}^{t} Pr(S_{t-i} = 1 | S_{t-i} \neq \emptyset) Pr(\tau = i |S'_t \neq \emptyset) \\
&\equiv \left( \sum_{i=1}^{t} Pr(S_{t-i} = 1 | S_{t-i} \neq \emptyset) w_i^t \right)
\addtocounter{equation}{1}\tag{\theequation} \label{sumexpanded},
\end{align*}
where $w_i^t = Pr(\tau = i |S'_t \neq \emptyset)$.

The result then follows from the following three claims, to be proved:
\begin{enumerate}
	\item For any $\varepsilon > 0$ and positive integer $k$, for all sufficiently large $t$, $\sum_{i=t-k}^{t} w^t_i < \varepsilon$.\label{fact1}
	\item $\lim_{t \to \infty} Pr(S_t = 1 | S_t \neq \emptyset)$ exists.\label{fact2}
	\item  Letting $w_{\infty}^t \equiv Pr(\tau > t | S'_t \neq \emptyset)$, we have $\sum_{i =1}^t w_i^t + w_{\infty}^t = 1$. Moreover, $w_{\infty}^t \to 0$ as $t \to \infty$, i.e., the probability that the signal has not yet mutated to a 1 by time $t$, conditional on survival to $t$, goes to 0 as $t$ grows.\label{fact3}
\end{enumerate}

To see that these claims imply the result, note that by claim \ref{fact1}, most of the weight falls on the first $t-k$ terms of the sum in equation  \ref{sumexpanded} for large enough $t$.  By claim \ref{fact2}, for a large enough $k$ (growing slower than $t$), these first $t-k$ terms will be close to  $\lim_{t \to \infty} Pr(S_t = 1 | S_t \neq \emptyset)$, and therefore by claim \ref{fact3} the limiting weighted sum of these terms converges to this value as well.

Claim \ref{fact3} is clear, so we prove the other two.

First we prove claim \ref{fact1}. Note that $p_0^i(1-\mu)^{i-1}\mu$ is the probability of survival with no mutation through $i-1$ and then survival with mutation at $t = i$, i.e., $Pr(\tau = i) = p_0^i(1-\mu)^{i-1}\mu$. Second, let $m_i$ be number of mutations through time $i$. Obviously, $Pr(S'_i \neq \emptyset) > Pr(S'_i \neq \emptyset$ and $m_i = 1)$. Third, if survival were always at rate $p_0$, then $Pr(S'_i \neq \emptyset$ and $m_i = 1) = i p_0^i(1-\mu)^{i-1}\mu$. However, since survival likelihood immediately after the first mutation, $p$, is strictly higher than $p_0$ and mutations sometimes occur (note, we assume $\mu > 0$), $Pr(S'_i \neq \emptyset$ and $m_i = 1) > i p_0^i(1-\mu)^{i-1}\mu$. Putting these observations together, we have
\begin{equation}\label{tau_ineq}
0 \leq \frac{Pr(\tau = i)}{Pr(S'_i \neq \emptyset)} < \frac{p_0^i(1-\mu)^{i-1}\mu}{i p_0^i(1-\mu)^{i-1}\mu}  = \frac{1}{i},
\end{equation}
so that $\frac{Pr(\tau = i)}{Pr(S'_i \neq \emptyset)} \to 0$ as $i \to \infty$, where, as noted earlier, the strict inequality arises from replacing $Pr(S'_i \neq \emptyset)$ with a lower bound on the probability of exactly one mutation occurring over the course of the first $i$ periods, and all the ways this could happen, and then
noting that $p_0<p_1$.
Now
\begin{align*}
Pr(\tau = i | S'_t \neq \emptyset) &= \frac{Pr(S'_t \neq \emptyset | \tau = i)Pr(\tau = i)}{Pr(S'_t \neq \emptyset)} \\
&= \frac{Pr(S_{t-i} \neq \emptyset)Pr(\tau = i)}{Pr(S'_t \neq \emptyset)} \\
&= \frac{Pr(S_{t-i} \neq \emptyset)Pr(\tau = i)}{Pr(S'_{t-i} = 1)Pr(S_i \neq \emptyset) + Pr(S'_{t-i} = 0)Pr(S'_i \neq \emptyset)} \\
&< \frac{Pr(S_{t-i} \neq \emptyset)}{Pr(S'_{t-i} \neq \emptyset)} \frac{Pr(\tau = i)}{Pr(S'_i \neq \emptyset)},
\end{align*}
where the inequality follows from the fact that $Pr(S_i \neq \emptyset) > Pr(S'_i \neq \emptyset)$, by Lemma \ref{facts}, Part 1. $\frac{Pr(S_{t-i} \neq \emptyset)}{Pr(S'_{t-i} \neq \emptyset)}$ is bounded by Lemma \ref{facts} Part 2 (as it has a limit), and $\frac{Pr(\tau = i)}{Pr(S'_i \neq \emptyset)}$ can be made arbitrarily small for large enough $i$ by equation \ref{tau_ineq}.
Thus, for any $\delta$ and $k$ we can find large enough $t$ for which $w_i^t< \delta$ for $i>t-k$.
Choosing $\delta=\varepsilon/k$ establishes claim \ref{fact1}.

Finally, we prove claim \ref{fact2}. The probability distribution of $S_t$ is given by $e_1^\prime A^t$, where $$ A =
\begin{bmatrix}
p_1(1 - \mu) & p_1\mu & 1-p_1 \\
p_0\mu & p_0(1-\mu) & 1-p_0\\
0 & 0 & 1
\end{bmatrix}$$ is the Markov transition matrix for $S$. Let $B$ be the principal $2\times 2$ submatrix of $A$.
By the partitioned matrix multiplication formula, $Pr(S_t = 1 | S_t \neq \emptyset) = \frac{e_1^\prime B^t e_1}{e_1^\prime B^t \mathbf{1}}.$ Since $B$ is strictly positive, the Perron-Frobenius theorem implies that this expression converges to the first coordinate of the left eigenvector corresponding to the largest eigenvalue of $B$, normalized so that its coordinates sum to one.\eproof

\medskip

\noindent {\bf Proof of Lemma \ref{learn_survival}:}

$\lim_{t \to \infty}\frac{P^t_{0S}}{ P^t_{1S} } = r$ for some $r<1$, by Lemma \ref{facts}. Let $r_t$ be the $t^{th}$ term in the sequence.

Let $m(t)$ be the number of surviving signals.
By Chernoff bounds, it follows that
\[
Pr ( m(t) > n(t) P^t_{1s} (1+r_t)/2 | \omega = 1) \rightarrow 1
\]
and
\[
Pr ( m(t) < n(t) P^t_{1s} (1+r_t)/2 | \omega = 0) \rightarrow 1
\]
provided that $n(t) P^t_{1s} \rightarrow \infty$.
Given this separation, it is easy to the check that if $n(t) P^t_{1s} \rightarrow \infty$, the beliefs will converge to 0 or 1 in probability.

Next, note that if $n(t) P^t_{1s} \rightarrow 0$, then the expected number of surviving signals in either state is 0, and that happens with the probability going to 1 by Chebychev, and so
there is no learning.   So, the threshold is $1/ P^t_{1s}$.

Note that the probability of survival lies between $p_0^t$ and $p_1^t$ (so its reciprocal lies between $1/p_1^t$ and $1/p_0^t$) and so
$$1/ P^t_{1s} = \frac{1}{(p_1 \lambda(t) + (1 - \lambda(t))p_0)^t}.$$
The fact that $\lambda(t)$ converges to some $\lambda$ then follows
from the Perron-Frobenius argument in the proof of Lemma \ref{facts}, Part 4: the distribution of the signal conditional
on survival converges to the quasi-stationary distribution given by the Perron eigenvector of the transient submatrix $B$,
and so the per-period probability of survival converges to a limit that lies between $p_0$ and $p_1$.\eproof

\section*{Appendix B: Full Bayesian Learning vs Learning Only from Survival or Only from Content}
\label{learnonly}

In this section, we provide a bound on how much more likely a Bayesian agent using both signal survival and message content is to guess the true state compared to agents who use rules of thumb that account only for signal survival or only for average signal content.

Without loss of generality, we focus on the case in which $p_1 \geq p_0$.

First, let us consider the case in which
the learner has access to a single chain and needs to predict the state based on the signal $s_t \in \{0,1,\emptyset\}$. We consider four different ways in which the learner might guess.

\begin{itemize}
\item A ``Bayesian agent,'' $B$, guesses the most likely state conditional on both signal survival and signal content.
\item A ``survival rule-of-thumb agent,''  $S$, guesses 1 if a signal is received and guesses 0 if no signal is received.
\item A ``content rule-of-thumb agent,'' $C$, guesses 1 if signal 1 is received, 0 if signal 0 is received, and guesses in favor of the prior if no message is received (flipping a coin if $\theta=1/2$).
\item A ``naive agent,''  $N$,  always guesses in favor of the prior.
\end{itemize}

 $S,C,N$ are collectively referred to as ``limited learners'' since they make their guess based on less information than is available.

\begin{proposition}
\label{relativeLearning}
Suppose that $1 \geq p_1 \geq p_0 \geq 0$ and $\mu_{01} = \mu_{10} = \mu \in [0,1/2]$.
The probability that a Bayesian agent is correct in guessing the state is at most
$\frac{4}{3}$ higher
than the best limited learner when $t=1$, and at most $\frac{3}{2}$ higher than the best limited learner for all $t>1$.\footnote{We conjecture that the bound is $\frac{4}{3}$ for any $t>1$.}
Moreover, as $t$ grows, this upper bound converges to 1.
\end{proposition}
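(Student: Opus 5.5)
The plan is to reduce the comparison to the joint probabilities $q_\omega(s)\equiv \Pr(\omega)\Pr(s_t=s\mid\omega)$ for $\omega\in\{0,1\}$ and $s\in\{0,1,\emptyset\}$, where $\Pr(\omega=1)=\theta$. In these terms the four success probabilities are
\[
B=\sum_{s}\max_\omega q_\omega(s), \qquad N=\max\Big\{\sum_s q_1(s),\ \sum_s q_0(s)\Big\},
\]
\[
S=q_1(1)+q_1(0)+q_0(\emptyset), \qquad C=q_1(1)+q_0(0)+q_{\hat\omega}(\emptyset),
\]
where $\hat\omega$ is the prior-favored state (a coin flip if $\theta=1/2$). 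The Bayesian rule is a map $\{0,1,\emptyset\}\to\{0,1\}$, so there are only eight possibilities. I would enumerate them and discard those where the Bayesian rule coincides with $N$, $S$ or $C$, since there the ratio is $1$.

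Monotonicity facts restrict which rules can occur. Lemma \ref{ordering_probs}, together with $p_1\ge p_0$ and $\mu\le 1/2$, gives that the likelihood ratio $q_1(1)/q_0(1)$ is at least $q_1(0)/q_0(0)$. So a Bayesian who guesses 1 after hearing 0 also guesses 1 after hearing 1. Similarly, Lemma \ref{facts}, Part 1 shows that survival is favorable to state 1, so the ratio for $\emptyset$ is at most the ratio for surviving messages taken together. This should leave essentially two residual rules.
\begin{enumerate}
\item Guess $1$ on $\{1\}$ and $0$ on $\{0,\emptyset\}$ when $\theta>1/2$. This differs from $C$ only at $\emptyset$.
\item Guess $1$ on $\{1,\emptyset\}$ and $0$ on $\{0\}$ when $\theta<1/2$, or the analogous rule that differs from $S$ at one signal.
\end{enumerate}
In each residual case the Bayesian beats the closest limited rule by exactly one gap, for instance $q_0(\emptyset)-q_1(\emptyset)$. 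I would then bound that gap by a combination of the other terms. The cleanest route is to find weights $\alpha,\beta,\gamma\ge 0$ summing to one with $\alpha N+\beta S+\gamma C\ge \tfrac34 B$, using that the residual rule's own optimality conditions, namely $q_0(\emptyset)\ge q_1(\emptyset)$ and $q_1(1)\ge q_0(1)$, constrain the gap.

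For $t=1$ everything is explicit: $q_1(1)=\theta p_1(1-\mu)$, $q_1(0)=\theta p_1\mu$, $q_1(\emptyset)=\theta(1-p_1)$, and similarly for state 0. The claim then becomes a finite-dimensional inequality in $(\theta,p_0,p_1,\mu)$. I would maximize $B/\max\{N,S,C\}$ directly and expect the extremum $4/3$ to occur where two of the limited learners tie with the Bayesian's gap. For $t>1$ I cannot use closed forms. Instead I would use only the qualitative inequalities that persist: survival favors state 1 by at least the factor $z$ from Lemma \ref{facts}, Part 1, and the orderings in Lemma \ref{ordering_probs}. Redoing the optimization over all $q$'s consistent with these inequalities should give the weaker constant $3/2$. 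This relaxation, and checking that it really yields $3/2$, is the main obstacle. I would first try the averaging argument with $\alpha=\beta=\gamma=1/3$ and see what constant it produces before tuning the weights.

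For the limit statement, Lemma \ref{facts}, Part 4 gives $\lim_t \Pr(\omega=1\mid s_t=1)=\lim_t\Pr(\omega=1\mid s_t=0)$. Hence, for large $t$, the Bayesian guess after $0$ and after $1$ agrees except when the common limiting posterior equals exactly $1/2$, and in that case either guess loses only a vanishing amount. So asymptotically the Bayesian rule depends only on survival and coincides with $S$ or $N$, up to an error that vanishes as $t\to\infty$. Since all success probabilities are bounded below by $\max\{\theta,1-\theta\}>0$, the ratio converges to $1$.
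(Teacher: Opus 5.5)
Your route works, and it is genuinely different from, and stronger than, the paper's. After reducing to the one nontrivial configuration, the paper splits into cases. That configuration is the one in which $B$ guesses 1 after a 1 and 0 after a 0 or $\emptyset$, with $\theta\ge\frac12$ (see below on why your second residual rule is empty). The paper then writes $U_B=a+b+c$ with $a=q_1(1)$, $b=q_0(0)$, $c=q_0(\emptyset)$, argues $a\ge b$, and splits on $b$ versus $c$ to get $\max\{U_S,U_C\}\ge\frac23U_B$. Its $4/3$ at $t=1$ comes from a case-by-case maximization over $(p_1,p_0,\theta,\mu)$ using the $t=1$ closed forms.

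The step you flag as the main obstacle is one line and needs neither $z$ nor any closed forms. Since $N\ge 1-\theta\ge b+c$, $S\ge a+c$ and $C\ge a+b$, equal weights give $\max\{N,S,C\}\ge\frac23 B$ for every $t$, without the paper's $a\ge b$. Re-weighting to $(\alpha,\beta,\gamma)=(0,\frac12,\frac12)$ does better. Put $d=q_1(0)$ and $e=q_1(\emptyset)$. Then $S=a+d+c$ and $C\ge a+b+e$; the latter also holds under the coin flip at $\theta=\frac12$, since $c\ge e$ here. Hence
\[
\tfrac12(S+C)-\tfrac34 B\ \ge\ \tfrac14\big[(a+d+e)-(b+c)+(d+e)\big]\ \ge\ \tfrac14(2\theta-1)\ \ge\ 0,
\]
using $a+d+e=\theta$ and $b+c\le1-\theta$. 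So your weighting gives $4/3$ for every $t$, which the paper only conjectures for $t>1$. The paper's explicit optimization is then needed only to show that $4/3$ is approached, at $p_1=1$, $p_0=\frac12$, $\mu=0$, $\theta\downarrow\frac12$.

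Three repairs are needed.

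First, the ordering $P^t_1\ge P^t_0$ does not follow from Lemma~\ref{ordering_probs}, which only compares $\Pr(s_t=1\mid\omega)$ with $\Pr(s_t=0\mid\omega)$ within a state. Use Lemma~\ref{relativeLearningUtilityLemma} instead.

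Second, your residual rule (2) is empty. Survival favoring state 1 gives $\Pr(s_t=\emptyset\mid\omega=1)\le\Pr(s_t=\emptyset\mid\omega=0)$, so $P^t_\emptyset\le\theta$. Guessing 1 after $\emptyset$ therefore forces $\theta\ge\frac12$, and there that rule coincides with $C$; at $\theta=\frac12$ the forced tie $P^t_\emptyset=\frac12$ makes the coin flip payoff-equivalent. Conversely, rule (1) must also be handled at $\theta=\frac12$, where $C$'s coin flip does differ from $B$. The bound above covers that case.

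Third, your limit argument relies on Lemma~\ref{facts}, Part 4, which assumes $\mu>0$ and $p_1>p_0>0$. At $\mu=0$ content stays perfectly informative, so $B$ does not collapse to $S$ or $N$. The paper's route is lighter here. In the residual configuration the only gap to $C$ is at $\emptyset$, and $P^t_\emptyset\to\theta$ whenever survival vanishes in both states. So the configuration disappears for $\theta>\frac12$, and the gap $\frac12(c-e)$ vanishes at $\theta=\frac12$. Two cases remain, both immediate. If $\mu=0$ and $p_1=1>p_0$, then $B-S=(1-\theta)p_0^t\to0$. If $p_0=p_1$, then $P^t_\emptyset=\theta$ and $B\in\{N,C\}$ outright.
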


Proposition \ref{relativeLearning} implies that, when word-of-mouth chains are long, a belief-updating strategy that uses only message survival or only message content is approximately equivalent to one that uses all available information, no matter what the parameters and no matter what the realized state.\footnote{This is obvious when $\mu=1/2$, in which case message content contains no information, or when $p_0=p_1$, in which case message survival contains no information.}

Next, suppose that the learner observes multiple chains. In this context, define ``C" to be an agent who guesses 1 whenever the fraction of 1 messages compared to 0 messages is above a threshold, and define ``S" to be the an agent who guesses 1 whenever the number of messages that survive is above or below a threshold.  These thresholds are the conditional Bayesian ones, but these agents only consider one aspect of the information available.

It is difficult to give tight bounds on the relative performance of the Bayesian agent and the best of the limited learners when there are many sequences.
However, we establish limiting results. In particular, everywhere in the parameter space, the threshold for learning for agent B is the same as for one of the limited learners.  Thus, there is no number of starting messages for which a Bayesian agent can learn but none of the naive agents can.
Indeed, for large $t$ full learning can be obtained from just one dimension, and we get the following result.

\begin{proposition}\label{multiSignalRelativeLearning}
	For any $\theta, p_0, p_1 \in [0, 1]$ and $\mu_{01} = \mu_{10} = \mu \in [0, \frac{1}{2}]$, the threshold for learning is the same for $B$ as it is for the better of $C$ or $S$.
\end{proposition}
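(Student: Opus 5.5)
The plan is to reduce the comparison to a single-chain divergence computation. The learner's data from $n(T)$ independent chains are i.i.d. draws from a three-point distribution on $\{1,0,\emptyset\}$, with probabilities $q_\omega(1), q_\omega(0), q_\omega(\emptyset)$ under state $\omega$. For i.i.d. product experiments, the two product measures are asymptotically singular exactly when $n(T)\,H^2_T\to\infty$, and asymptotically indistinguishable exactly when $n(T)\,H^2_T\to 0$. Here $H^2_T=\sum_x(\sqrt{q_1(x)}-\sqrt{q_0(x)})^2$ is the per-chain squared Hellinger distance. This follows from the product identity for the Hellinger affinity together with the standard two-sided bounds between Hellinger and total variation distance. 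Singularity gives $plim\,|b-\omega|=0$ and indistinguishability gives $plim\, b=\theta$, exactly as in the converse part of the proof of Lemma \ref{multipath1}. Hence $1/H^2_T$ is a threshold for learning for the Bayesian agent $B$.

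Next I will bound $H^2_T$ above and below by the information in the two coarsenings. Let $P^T_{\omega S}$ denote the survival probabilities, and let $\pi_\omega(T)=Pr(s_T=1\mid s_T\neq\emptyset,\omega)$. The survival indicator is a coarsening of $s_T$, and so is the content of surviving messages, which on $n$ chains reduces to the count $k$ among $m$ survivors. By data processing, $H^2_T \ge \max\{H^2_{S,T},\,H^2_{C,T}\}$. For the upper bound I will write $q_\omega(1)=P^T_{\omega S}\pi_\omega$ and $q_\omega(0)=P^T_{\omega S}(1-\pi_\omega)$, and use $(\sqrt{ab}-\sqrt{cd})^2\le 2(\sqrt a-\sqrt c)^2 b+2c(\sqrt b-\sqrt d)^2$. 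This should give
\[
H^2_T\le K\Big[(\sqrt{P^T_{1S}}-\sqrt{P^T_{0S}})^2 + P^T_{1S}(\pi_1-\pi_0)^2\Big]
\]
for a constant $K$ depending only on the parameters. Here the null term is of smaller order, since $(\sqrt{1-a}-\sqrt{1-c})^2\le (a-c)^2/(1-\max)$ and $a,c\to0$. I will also use that $\pi_\omega$ stays bounded away from $0$ and $1$ (Lemma \ref{miracle}; Lemma \ref{facts}, Part 4), and that $P^T_{1S}/P^T_{0S}$ is bounded (Lemma \ref{facts}, Parts 1--2).

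I then split into cases. If $p_1=p_0$, survival carries no information, $H^2_T\asymp p^TM^{2T}$, and $C$'s threshold $1/(p^TM^{2T})$ from Lemma \ref{multipath1} matches $B$'s. If $p_1>p_0>0$ and $\mu>0$, Lemma \ref{facts} gives $P^T_{1S}/P^T_{0S}\to y>1$. Then $(\sqrt{P^T_{1S}}-\sqrt{P^T_{0S}})^2\asymp P^T_{1S}$, and this dominates the content term because $(\pi_1-\pi_0)^2\le1$. So $H^2_T\asymp P^T_{1S}$, which matches $S$'s threshold $1/P^T_{1S}$ from Lemma \ref{learn_survival}. The degenerate cases $\mu=0$, $p_0=0$, $p_1=0$, and $\mu=1/2$ will be handled directly: either learning is trivial for every agent, or one dimension is uninformative and the same computation applies to the remaining one.

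The main obstacle will be showing that the simple threshold rules actually attain the divergence-based rate, as opposed to merely the Bayesian posterior doing so. The statement defines $C$ and $S$ through thresholds on fractions or counts, so I will rerun the Chebyshev/Chernoff separation arguments of Lemmas \ref{multipath1} and \ref{learn_survival}. For $C$ with $p_1=p_0$, the key quantity is the state gap in the expected fraction of 1s, $\pi_1-\pi_0=M^T$, against a standard deviation of order $(n(T)p^T)^{-1/2}$. For $S$, the key quantity is the gap $(1-1/y)\,n(T)P^T_{1S}$ against a standard deviation of order $\sqrt{n(T)P^T_{1S}}$. The lower bound from data processing then shows that neither limited learner can beat $B$, so the thresholds coincide.
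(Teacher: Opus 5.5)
Your argument is correct, but it takes a different route from the paper, which never computes a divergence. The paper splits into three cases: $\mu=0$, then $p_1=p_0$ with $\mu>0$, then $p_1\neq p_0$ with $\mu>0$.

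\begin{itemize}
\item When $p_1=p_0$, survival is uninformative. Lemma \ref{Bayes} then shows that $B$'s posterior \emph{coincides} with $C$'s, so the threshold of Lemma \ref{multipath1} transfers verbatim.
\item When $p_1>p_0$, the paper sandwiches $B$ between two elementary facts. $B$ is better informed than $S$, so it learns whenever $n(T)P^T_{1S}\to\infty$ by Lemma \ref{learn_survival}. If instead $n(T)P^T_{1S}\to 0$, then since $P^T_{0S}\le P^T_{1S}$, with probability tending to one no message survives in either state, and $B$'s posterior stays at $\theta$.
\end{itemize}

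You instead identify $B$'s threshold in general as $1/H^2_T$, via tensorization of the Hellinger affinity, and then compare $H^2_T$ with the coarsened experiments. This gives one formula for the Bayesian threshold that holds for any per-chain law. It is therefore also the natural tool for the asymmetric-rate case the paper leaves as a conjecture. It also reduces ``no limited learner beats $B$'' to a one-line data-processing statement.

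The cost is machinery that the paper's argument shows is unnecessary here. When $p_1>p_0$, you need neither the decomposition inequality nor the bounds on $\pi_\omega$. Trivially, for large $T$,
\[
H^2_T\le P^T_{1S}+P^T_{0S}+O\big((P^T_{1S})^2\big)\le 3P^T_{1S},
\]
which is just the paper's ``nothing survives'' step written in divergence form. When $p_1=p_0$, the exact equality of the two posteriors is simpler than showing $H^2_T\asymp p^TM^{2T}$.

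One small correction: $H^2_{C,T}$ is not a per-chain quantity. The fraction $k/m$ is not a function of a per-chain coarsening. Moreover, $(k,m)$ together with $n$ is sufficient for the full data, so it is not a coarsening at all. Apply data processing to the statistic $k/m$ on the full $n(T)$-chain experiment instead; that is all your argument actually uses.
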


\medskip

The next lemma is useful in the  proof of Proposition \ref{relativeLearning}.

Let $P_1^t$ ($P_0^t$) denote the Bayesian posterior probability that the state is 1 conditional upon a signal being received at time $t$ and being 1 (0).  Similarly, let $P_\emptyset^t$ ($P_S^t$) denote the  Bayesian posterior probability that the state is 1 conditional upon no signal (some signal) being received at time $t$.

\begin{lemma}\label{relativeLearningUtilityLemma}
	If $p_1>p_0$, then $P_1^t\geq P_0^t$  and  $P_1^t\geq P_\emptyset^t$.
\end{lemma}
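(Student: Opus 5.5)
Since posteriors $P_x^t = \Pr(\omega=1\mid x)$ are monotone in the likelihood ratio $L(x)=\Pr(x\mid\omega=1)/\Pr(x\mid\omega=0)$, both claims reduce to comparing likelihood ratios. Write $a_t=\Pr(s_t=1\mid\omega=1)$, $b_t=\Pr(s_t=0\mid\omega=1)$, $c_t=\Pr(s_t=1\mid\omega=0)$, $d_t=\Pr(s_t=0\mid\omega=0)$. Then $P_1^t\ge P_0^t$ is equivalent to $a_t d_t\ge b_t c_t$. For the null signal, $P_1^t\ge P_\emptyset^t$ is equivalent to
\[
\frac{a_t}{c_t}\ge \frac{1-a_t-b_t}{1-c_t-d_t}.
\]

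For the first inequality, I would prove by induction on $t$ that $a_td_t-b_tc_t\ge 0$. At $t=1$ we have $a_1=p_1(1-\mu)$, $b_1=p_1\mu$, $c_1=p_0\mu$ and $d_1=p_0(1-\mu)$, so
\[
a_1d_1-b_1c_1=p_1p_0\bigl((1-\mu)^2-\mu^2\bigr)=p_1p_0(1-2\mu)\ge 0.
\]
For the inductive step, the recursions displayed in the proof of Lemma \ref{ordering_probs} say that $(a_{t+1},b_{t+1})=(a_t,b_t)B$ and $(c_{t+1},d_{t+1})=(c_t,d_t)B$, where $B$ is the $2\times 2$ transient block of the transition matrix. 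The $2\times2$ determinant of the stacked rows is therefore multiplicative:
\[
\det\begin{pmatrix}a_{t+1}&b_{t+1}\\ c_{t+1}&d_{t+1}\end{pmatrix}=\det\begin{pmatrix}a_t&b_t\\ c_t&d_t\end{pmatrix}\det B,
\qquad \det B=p_1p_0(1-2\mu)\ge0 .
\]
This gives $a_td_t-b_tc_t=\bigl(p_1p_0(1-2\mu)\bigr)^t\ge0$ directly, with no real induction needed.

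For the second inequality, note that any ratio of the form $(\alpha+\beta)/(\gamma+\delta)$ lies between $\alpha/\gamma$ and $\beta/\delta$. I would first show $a_t/c_t\ge L(\text{survival})=(a_t+b_t)/(c_t+d_t)$. This follows from $a_t/c_t\ge b_t/d_t$, which is exactly step one, since $(a_t+b_t)/(c_t+d_t)$ lies between $b_t/d_t$ and $a_t/c_t$. Next, Lemma \ref{facts} Part 1 gives $P^t_{1S}\ge P^t_{0S}$, i.e. $a_t+b_t\ge c_t+d_t$, hence $1-a_t-b_t\le 1-c_t-d_t$. So the null likelihood ratio is at most $1$, while the survival likelihood ratio is at least $1$. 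Chaining these,
\[
\frac{a_t}{c_t}\ \ge\ \frac{a_t+b_t}{c_t+d_t}\ \ge\ 1\ \ge\ \frac{1-a_t-b_t}{1-c_t-d_t}.
\]
Lemma \ref{facts} is stated for $p_1>p_0>0$; for $p_0=0$ the survival ratio is infinite and the conclusion is immediate.

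The main obstacle is degenerate denominators. If $c_t=0$ or $1-c_t-d_t=0$, the ratios must be read in cross-multiplied form, or the posterior is $1$ trivially. This happens when $\mu=0$ or $p_0=0$, and should be handled by a short separate remark. If $1-c_t-d_t=0$ then $p_0=1$, which contradicts $p_1>p_0$, so that case does not arise.
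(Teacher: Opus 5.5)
Your proof is correct, and for the first inequality it takes a different route from the paper.

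The paper argues by induction directly on posteriors. By the Markov property, $P_1^{t+1}$ and $P_0^{t+1}$ are each mixtures of $P_0^t$ and $P_1^t$, weighted by the backward probabilities of $s^t$ given $s^{t+1}$. The condition $\mu\le 1/2$ then guarantees that a 1 at $t+1$ places at least as much weight on a 1 at $t$ as a 0 at $t+1$ does.

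You instead use that the rows $(a_t,b_t)$ and $(c_t,d_t)$ both evolve by right-multiplication by $B$. The $2\times 2$ determinant is therefore multiplicative, and you get the closed form $a_td_t-b_tc_t=\bigl(p_1p_0(1-2\mu)\bigr)^t$. This formula shows at once that the inequality is strict exactly when $p_0>0$ and $\mu<1/2$, and that $p_1>p_0$ plays no role in this half of the lemma. The paper's mixture argument, by contrast, gives the interpretation that later posteriors are garblings of earlier ones.

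For the second inequality you follow the paper in substance, in likelihood-ratio language. Your mediant step $a_t/c_t\ge (a_t+b_t)/(c_t+d_t)$ is the likelihood-ratio form of the paper's remark that $P_S^t$ is a convex combination of $P_1^t$ and $P_0^t$. Your comparison of the survival and null likelihood ratios with $1$ is the paper's appeal to Lemma~\ref{facts}, Part~1, to get $P_S^t\ge P_\emptyset^t$.

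Your handling of the degenerate cases is more careful than the paper's. You treat $c_t=0$ (when $\mu=0$ or $p_0=0$) separately, and you note that $1-c_t-d_t=0$ cannot occur under $p_1>p_0$.
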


\noindent{\bf Proof of Lemma \ref{relativeLearningUtilityLemma}}

Let $s^t$ denote the state of the signal at period $t$. That $P_1^t\geq P_0^t$ holds when $t=1$ is easy to check from Bayes rule, given that $p_1> p_0$ and
$\mu\leq 1/2$.
Now suppose $P_1^t\geq P_0^t$ for some $t$. Then by the law of total probability, it follows that
\begin{align*}
P_1^{t+1} &= Pr(s^t = 0 | s^{t+1} = 1) P_0^t + Pr(s^t=1 | s^{t+1} = 1) P_1^t \\
&= \frac{p_0\mu Pr(s^t = 0)}{p_0\mu Pr(s^t = 0) + p_1 (1- \mu) Pr(s^t=1)} P_0^t + \frac{p_1 (1- \mu) Pr(s^t=1)}{p_0\mu Pr(s^t = 0) + p_1 (1- \mu) Pr(s^t=1)} P_1^t
\end{align*}
Similarly,
\[P_0^{t+1} = \frac{p_0(1-\mu) Pr(s^t = 0)}{p_0(1-\mu) Pr(s^t = 0) + p_1 \mu Pr(s^t=1)} P_0^t + \frac{p_1 \mu Pr(s^t=1)}{p_0(1-\mu) Pr(s^t = 0) + p_1 \mu Pr(s^t=1)} P_1^t\]
Since $P_1^t \geq P_0^t$ by the inductive hypothesis, it suffices to show that
\[\frac{p_1 (1- \mu) Pr(s^t=1)}{p_0\mu Pr(s^t = 0) + p_1 (1- \mu) Pr(s^t=1)} \geq \frac{p_1 \mu Pr(s^t=1)}{p_0(1-\mu) Pr(s^t = 0) + p_1 \mu Pr(s^t=1)}\]
i.e., that
\[\frac{1}{1 + \frac{p_0}{p_1}\frac{Pr(s^t = 0)}{Pr(s^t=1)}\frac{\mu}{1-\mu}} \geq \frac{1}{1 + \frac{p_0}{p_1}\frac{Pr(s^t = 0)}{Pr(s^t=1)}\frac{1-\mu}{\mu}}\]
which follows, since $\mu \leq 1-\mu$.

\ \ \ \ To see that $P_1^t \geq P_{\emptyset}^t$, note that it suffices to prove that $P_S^t \geq P_{\emptyset}^t$, since $P_S^t$ is a convex combination of $P_1^t$ and $P_0^t$, and we just proved $P_1^t \geq P_0^t$. Now the statement follows directly from part 1 of Proposition \ref{facts}.\eproof

\medskip
\noindent {\bf Proof of Proposition \ref{relativeLearning}:}

First, note that we can focus on the case in which $p_1\neq p_0$ as otherwise there is nothing to be learned from signal survival, and agent $C$ does as well as $B$.
Without loss of generality we take $p_1>p_0$.   Similarly, if $\mu=1/2$, then all learning is from survival and $S$ does as well as $B$, and so we can take $\mu<1/2$.

Note that by Lemma \ref{relativeLearningUtilityLemma},  $P^t_1\geq P^t_0$  and  $P^t_1\geq P^t_\emptyset$. In order for $B$ to do strictly better in expectation than the other agents, it must be that $P^t_1>1/2$ and at least one of $P^t_0$ and $P^t_\emptyset$ are less than 1/2.
To see this note that if all three are on the same side of 1/2, then they must lie on the same side as the prior.\footnote{They cover three disjoint events whose union is all possibilities, and so the overall probability of a 1 is a convex combination of these conditionals, and so it is impossible to have them all weakly and some strictly greater (or all weakly and some strictly less) than the prior.}   If $\theta\neq 1/2$ then $N$ gets the same payoff as $B$.
If $\theta=1/2$,  then for all three to lie on the same side of the prior it must be that $p_1=p_0$,
in which case there is nothing learned from survival and $C$ does as well as $B$ in expectation.

Thus, $P^t_1>1/2$ and at least one of $P^t_0$ and $P^t_\emptyset$ are less than 1/2.
If it is just $P^t_\emptyset$ that is less than 1/2, then $S$ guesses the same as $B$ (or equivalently in expected payoff terms).
Thus, we need $P^t_0<1/2$ to have a difference.

If is just $P^t_0$ that is less than 1/2, then $C$ guesses the same as $B$ except if $\theta \leq 1/2$.  But for such a $\theta$, it must be that $P^t_\emptyset \leq 1/2$ and so $C$ guesses as well as $B$.

So, consider the case in which $P^t_1>1/2$ and $P^t_0<1/2$ and $P^t_\emptyset <  1/2$. For $C$ to guess differently than $B$, it must be that $\theta\geq 1/2$.

We can compute the expected payoff's for the three most relevant agents for this remaining case (we ignore $N$ now, since in these conditions it is dominated by one of the others) for a given $(p_1, p_0, \mu,\theta)$ satisfying the above constraints.

Letting $U_B, U_C, U_S$ be the expected payoffs of agents $B, C$\footnote{$C$ has expected payoff
	$U_C = Pr(s_t=1)P^t_1 + Pr(s_t=0)(1-P^t_0)
	+(1-Pr(s_t=1)-Pr(s_t=0)) \left(I_{\theta > 1/2} P^t_\emptyset
	+ I_{\theta = 1/2}  1/2 \right).$
	The expression in the main text is obtained by noting that the worst ratio for this compared to $B$ will be in cases for which $\theta>1/2$} and $S$ respectively, it follows that
\begin{align*}
U_B &= Pr(s_t=1)P^t_1 + Pr(s_t=0)(1-P^t_0)  + (1-Pr(s_t=1)-Pr(s_t=0)) (1-P^t_\emptyset) \\
U_C &= Pr(s_t=1)P^t_1 + Pr(s_t=0)(1-P^t_0) + (1-Pr(s_t=1)-Pr(s_t=0)) P^t_\emptyset \\
U_S &= Pr(s_t=1)P^t_1 +  Pr(s_t=0)P^t_0 + (1-Pr(s_t=1)-Pr(s_t=0))(1-P^t_\emptyset)
\end{align*}
First, note that if $p_0<1$ and $\mu>0$, then as $t\rightarrow \infty$, then  $Pr(s_t=\emptyset) \rightarrow 1$ and $P^t_\emptyset \rightarrow \theta$, in which case the ratio of $B$'s payoff to $C$'s goes to 1. If $p_0<1$ and $\mu=0$, then $B$ does as well as $S$ for every $t$. If $p_1=p_0=1$, then $B$ does as well as $C$ for every $t$. These facts together establish the last claim in the proposition that as $t \to \infty$, the ratio $\frac{U_B}{\max\{U_S, U_C\}} \to 1$.

That the ratio is bounded above by 3/2 can be seen as follows.
Since $\theta\geq 1/2$ and $p_1> p_0$, it follows that
\[
Pr(s_t=1) \geq  Pr(s_t=0),   \ \ \   P^t_1 \geq  (1-P^t_0),  {\rm  \ \  and \  so \ \ }  Pr(s_t=1)P^t_1 \geq  Pr(s_t=0)(1-P^t_0).
\]
Then if $Pr(s_t=0)(1-P^t_0)\leq   (1-Pr(s_t=1)-Pr(s_t=0))(1-P^t_\emptyset)$ it follows that $U_S\geq  U_B 2/3$.
If $Pr(s_t=0)(1-P^t_0)\geq   (1-Pr(s_t=1)-Pr(s_t=0))(1-P^t_\emptyset)$ then it follows that $U_C\geq  U_B 2/3$.

To complete the proof, we compute
\[\max_{p_1, p_0, \theta, \mu \in [0, 1]} \frac{U_B}{\max\{U_S, U_C\}}.\]
for $t=1$.   We can rewrite the payoffs of agents $B$, $S$ and $C$ in the case $P_1^1 > 1/2$ and $P_0^1< 1/2$ and $P_{\emptyset}^1 < 1/2$ as follows:
\begin{align*}
U_B &= \theta p_1 (1 - \mu) + (1 - \theta) (1 - p_0 \mu) \\
U_C &= \theta (1 - p_1 \mu) + (1- \theta) p_0 (1 - \mu) \\
U_S &= \theta p_1                + (1 - \theta) (1 - p_0)
\end{align*}
where
\begin{align}
\theta p_1 \mu &\leq p_0 (1 - \theta) (1 - \mu) \label{bbeatss} \\
\theta (1 - p_1) &\leq (1 - \theta) (1 - p_0) \label{bbeatsc}\\
\theta &\geq 1/2 \label{thetarange}\\
\mu &\leq 1/2 \label{murange}\\
p_1 &\geq p_0 \label{survivalcond}\\
p_1, p_0, \mu, \theta &\in [0, 1] \label{paramrange}.
\end{align}

\textbf{Case 1}:  $U_S \leq U_C$.

This condition can be rewritten as
\begin{equation}\label{cbeatss}\theta(p_1\mu + (p_1-1)) \leq (1 - \theta)(p_0(1 - \mu) + (p_0 - 1))\end{equation}
The program with this additional constraint can be written as
\begin{align*}
\max_{p_1, p_0, \theta, \mu \text{ satisfy \ref{bbeatss}-\ref{cbeatss}}} \frac{U_B}{U_C} &\equiv \max_{p_1, p_0, \theta, \mu \text{ satisfy \ref{bbeatss}-\ref{cbeatss}}} \frac{\theta p_1 (1 - \mu) + (1 - \theta) (1 - p_0 \mu)}{\theta (1 - p_1 \mu) + (1- \theta) p_0 (1 - \mu)}\\
&= \max_{p_1, p_0, \theta, \mu \text{ satisfy \ref{bbeatss}-\ref{cbeatss}}} \frac{\theta p_1 + (1 - \theta) - \mu(\theta p_1 + (1 - \theta) p_0)}{\theta + (1 - \theta) p_0 - \mu (\theta p_1 + (1 - \theta) p_0)}\\
&\leq \max_{p_1, p_0, \theta, \mu \text{ satisfy \ref{bbeatss}-\ref{cbeatss}}} \frac{\theta + (1 - \theta) 2p_0 - 2\mu (\theta p_1 + (1 - \theta) p_0)}{\theta + (1 - \theta) p_0 - \mu (\theta p_1 + (1 - \theta) p_0)}
\end{align*}
where the inequality is from rearranging constraint \ref{cbeatss}, as $\theta p_1 + (1 - \theta) \leq \theta + (1-\theta)2p_0 - \mu(\theta p_1 + (1- \theta) p_0)$, and plugging this into the numerator. It is easily verified that the above ratio is decreasing in $\mu$ for any values of the remaining parameters \footnote{$\frac{d}{dx} \frac{A - 2x}{B - x} \leq 0$ if $A \leq 2 B$ and $A, B > 0$.}. Moreover, reducing $\mu$ to 0 only relaxes constraints \ref{bbeatss}, \ref{murange} and \ref{cbeatss}, and leaves the other constraints unaffected. Therefore,
$$
\max_{p_1, p_0, \theta, \mu \text{ satisfy \ref{bbeatss}-\ref{cbeatss}}} \frac{U_B}{U_C} \leq  \max_{p_1, p_0, \theta \text{ satisfy \ref{bbeatsc}-\ref{cbeatss}}} \frac{\theta + (1 - \theta) 2p_0}{\theta + (1 - \theta) p_0}
$$

It is clear that smaller values of $\theta$ increase this ratio, and by constraint \ref{thetarange}, the smallest value of $\theta$ is $\frac{1}{2}$. But while reducing $\theta$ down to $\frac{1}{2}$ for given $p_1$ and $p_0$ relaxes constraint \ref{bbeatsc}, doing so may violate constraint \ref{cbeatss}. We therefore separately consider the cases where either \ref{cbeatss} or \ref{thetarange} bind, since at least one of them must at the optimum.

\textbf{ Subcase 1:} \ref{cbeatss} is satisfied with equality, i.e., $\theta(1 - p_1) = (1 - \theta) (1 - 2p_0)$. Plugging this in, the objective then becomes $2\frac{1 + \theta(p_1-1)}{1 + \theta(p_1-1) + \theta}$, which is decreasing in $\theta$, so it is optimal to set $\theta=\frac{1}{2}$. The objective is then $2 \frac{1+p_1}{2+p_1} \leq 4/3$.
Note that at $p_1=1, p_0=\frac{1}{2}, \theta=\frac{1}{2}, \mu=0$, $\frac{U_B}{U_C} = \frac{4}{3}$, so this upper bound is tight.

\textbf{ Subcase 2:} $\theta=1/2$. Then $$\frac{U_B}{U_C} = \frac{p_1 + 1 - \mu(p_1+p_0)}{p_0+1 - \mu(p_1+p_0)},$$
which is weakly increasing in $\mu$ by constraint \ref{survivalcond}. Constraint \ref{cbeatss} can be rearranged to be $$\mu \leq \frac{2p_0 - p_1}{p_1 + p_0},$$
which, first, implies that $$\frac{U_B}{U_C} \leq \frac{2(p_1-p_0) + 1}{(p_1-p_0)+1},$$ and second, along with the condition that $\mu \geq 0$, implies that $$p_0 \geq \frac{p_1}{2}.$$
Since $\frac{2(p_1-p_0) + 1}{(p_1-p_0)+1}$ is decreasing in $p_0$, this expression is maximized under the given constraints when $p_0 = \frac{p_1}{2}$. Therefore, $\frac{U_B}{U_C} \leq \frac{p_1 + 1}{\frac{p_1}{2}+1}$, which is maximized when $p_1=1$ and equals $4/3$.

\textbf{Case 2:}  $U_S \geq U_C$.
The new constraint is
\begin{equation}\label{sbeatsc}\theta(p_1\mu + (p_1-1)) \geq (1 - \theta)(p_0(1 - \mu) + (p_0 - 1)),\end{equation}
and the relevant maximization program is

$$\max_{p_1, p_0, \theta, \mu \text{ satisfy \ref{bbeatsc}-\ref{paramrange}, \ref{sbeatsc}}} \frac{U_B}{U_S} \equiv \max_{p_1, p_0, \theta, \mu \text{ satisfy \ref{bbeatsc}-\ref{paramrange}, \ref{sbeatsc}}} \frac{\theta p_1 (1 - \mu) + (1 - \theta) (1 - p_0 \mu)}{\theta p_1 + (1- \theta) (1 - p_0)}.$$

Notice that the ratio $\frac{\theta p_1 (1 - \mu) + (1 - \theta) (1 - p_0 \mu)}{\theta p_1 + (1- \theta) (1 - p_0)}$ is linear and decreasing in $\mu$, and the constraints are linear in $\mu$ as well. Constraint \ref{bbeatss} only places an upper bound on $\mu$, so it is not relevant in pinning down this value at the optimum. On the other hand, constraint \ref{sbeatsc}, which can be rewritten as
\[2p_0(1-\theta) - p_1\theta - (1-2\theta) \leq (p_1\theta + p_0(1-\theta))\mu\]
and the constraint that $\mu \geq 0$ are relevant. There are two cases:

\textbf{Subcase 1: $2p_0(1-\theta) - p_1\theta - (1-2\theta) \geq 0$, $\mu=\frac{2p_0(1-\theta) - p_1\theta - (1-2\theta)}{p_1\theta + p_0(1-\theta)}$}.

Then
\begin{align*}
\frac{U_B}{U_S} &= \frac{\theta p_1 + (1 - \theta) - \mu (\theta p_1 + (1 - \theta) p_0)}{\theta p_1 + (1 - \theta) - (1 - \theta) p_0} \\
&=  \frac{\theta p_1 + (1 - \theta) - 2(1-\theta) p_0+ p_1\theta + (1 - 2\theta)}{\theta p_1 + (1 - \theta) - (1 - \theta) p_0} \\
&= \frac{2 \theta p_1 + (2 - 3\theta) - 2(1-\theta)p_0 }{\theta p_1 + (1 - \theta) - (1 - \theta) p_0} \\
&= 2 \frac{ \theta p_1 + (1 - \frac{3}{2}\theta) - (1-\theta)p_0 }{\theta p_1 + (1 - \theta) - (1 - \theta) p_0}
\end{align*}
Clearly, the ratio is decreasing in $\theta$, and moreover, decreasing $\theta$ only relaxes constraints \ref{bbeatss} and \ref{bbeatsc}. Therefore, constraint \ref{thetarange} binds and $\theta = \frac{1}{2}$ at the optimum, so

\[\frac{U_B}{U_S} = 2 \frac{p_1 + \frac{1}{2} - p_0}{p_1 + 1 - p_0}\]

Since $\mu = \frac{2p_0 - p_1}{p_1 + p_0}$ at $\theta=\frac{1}{2}$, constraints \ref{bbeatss} and \ref{bbeatsc} reduce to just $p_1 \geq p_0$. Since the ratio is increasing in $p_1 - p_0$, the only binding constraint is that $\mu \geq 0$, i.e., $2p_0 \geq p_1$. Therefore at the optimum, $p_1=1$, $p_0 = \frac{1}{2}$, $\mu=0$, $\theta=\frac{1}{2}$, and $\frac{U_B}{U_S} = \frac{4}{3}$.

\textbf{Subcase 2: $2p_0(1-\theta) - p_1\theta - (1-2\theta) \leq 0$, $\mu=0$.} In this case, the problem reduces to
\[\max_{p_1, p_0, \theta, \text{ satisfy \ref{bbeatsc},\ref{thetarange},\ref{survivalcond}, \ref{paramrange}, \ref{sbeatsc}}} \frac{\theta p_1  + (1 - \theta)}{\theta p_1 + (1- \theta) (1 - p_0)},\]
which is decreasing in $\theta$. Now
\begin{align*}
2p_0(1-\theta) - p_1\theta - (1-2\theta) &\leq 0 \\
\iff \theta(2 - 2p_0 - p_1) &\leq 1 - 2p_0
\end{align*}
Suppose $2 - 2p_0 - p_1 < 0$. Since $1 - 2p_0 \leq 1 - 2p_0 +(1-p_1) = 2 - 2p_0 - p_1 < 0$, it follows that $\theta (2 - 2p_0 - p_1) \geq \theta (1 - 2p_0) \geq 1 - 2p_0$.
Therefore, the only way to satisfy the constraint is if $p_1=1$ and $\theta=1$, in which case $\frac{U_B}{U_S} = 1$.

If $2 - 2p_0 - p_1 \geq 0$, then $\theta = \frac{1}{2}$
at the optimum, and so the constraint in this sub-subcase becomes $2p_0 \leq p_1$, while the objective function
is $\frac{p_1+1}{p_1 + 1 - p_0}$.
This constraint binds at the optimum and again the optimal value is $\frac{4}{3}$ at $p_1=1$
and $p_0=\frac{1}{2}$.\eproof

\medskip

The next lemma is useful in the proof of the second part of Proposition \ref{multiSignalRelativeLearning}.

\begin{lemma}
	\label{Bayes}
	Let $p_0=p_1$ and suppose a Bayesian agent with prior $\theta$ on the state being $1$ observes $m$ signals of which $k$ are $1$s.
	Then the agent's posterior that the state is 1 is
	\[
	\frac{\theta X_1(t)^{k}(1 - X_1(t))^{m-k}}{\theta X_1(t)^{k}(1 - X_1(t))^{m-k} + (1-\theta) (1-X_0(t))^{k}X_0(t)^{m-k}}. %= \frac{\theta(1 + \delta^t)^k}{\theta(1 + \delta^t)^k + (1-\theta)(1 - \delta^t)^k}.
	\]
\end{lemma}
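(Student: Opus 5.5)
The plan is a direct Bayes-rule computation. I will exploit the fact that when $p_0=p_1=p$, survival carries no information about the state.

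First, fix the learner's information: $n$ chains, of which $m$ deliver a non-null message, $k$ of them equal to 1. The joint probability of this realization given $\omega$ factors across the independent chains.

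\textbf{Survival.} Each chain survives $t$ steps with probability $p^t$ regardless of its content, since every transmission succeeds with probability $p$ whatever the current signal. So the survival pattern contributes the factor $p^{tm}(1-p^t)^{n-m}$, which is the same under $\omega=0$ and $\omega=1$.

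\textbf{Content.} Conditional on survival, the content of each surviving chain is independent of survival. This holds because the drop decisions are independent of the mutation draws when $p_0=p_1$, so the content process conditioned on survival is just the two-state mutation chain. By Lemma \ref{miracle}, under $\omega=1$ a surviving message is 1 with probability $X_1(t)$ and 0 with probability $1-X_1(t)$. Under $\omega=0$ it is 1 with probability $1-X_0(t)$ and 0 with probability $X_0(t)$. Independence across chains then gives the content likelihood
\[
\binom{m}{k}X_1(t)^k(1-X_1(t))^{m-k}
\]
under state 1, and
\[
\binom{m}{k}(1-X_0(t))^kX_0(t)^{m-k}
\]
under state 0. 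A likelihood conditioned on a specific labeled realization would drop the binomial coefficient; either way the factor is common to both states.

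\textbf{Posterior.} Apply Bayes' rule with prior $\theta$. The survival factor, the binomial coefficient, and the choice of which chains survived all cancel between numerator and denominator. What remains is exactly the displayed posterior.

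The only step needing care is the claim that content conditional on survival follows the unconditioned mutation chain. I would verify it by writing the transition matrix as $p$ times the mutation matrix on $\{0,1\}$, plus mass $1-p$ to $\emptyset$. Then $\Pr(s_t=j\mid \omega)=p^t\,[\text{mutation}^t]_{\omega j}$, so dividing by $p^t$ gives the Lemma \ref{miracle} probabilities.
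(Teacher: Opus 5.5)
Your proposal is correct and takes the same route as the paper, which omits the proof as ``direct.'' The intended argument is exactly your Bayes-rule computation: the state-independent survival factor $p^{t}$ cancels, and content conditional on survival follows the pure mutation chain with the probabilities $X_1(t)$, $X_0(t)$ from Lemma~\ref{miracle}.
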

The proof is direct and omitted.

\medskip

\noindent{\bf Proof of Proposition \ref{multiSignalRelativeLearning} :}

We proceed by cases for different values of the parameters.  We concentrate on situations in which $\mu<1/2$ since if $\mu=1/2$ then content is completely uninformative and  the result is direct.

\textbf{Case 1}: $\mu=0$. Suppose without loss of generality that $p_0 \leq p_1$. Any signal that reaches the agent is perfectly informative of the state, so a threshold for learning for agent B is the threshold for at least one signal to survive, which  (following the logic of the proofs above) is $\frac{1}{p_1^t}$. This is the threshold for learning for S when $p_0<p_1$ (survival counts separate the states) and for C when $p_0=p_1$ (every surviving message reveals the state).

\textbf{Case 2}: $p_1=p_0$ and $\mu>0$. By Proposition \ref{multipath1}, the threshold for learning for agent B is $\frac{1}{p_1^t (1 - 2\mu)^{2t}}$.  In this case there is no information from signal survival, and by Lemma \ref{Bayes}, agent B's posterior is the same as agent C's  posterior. Therefore, agent $C$ has the same threshold for learning as B.

\textbf{Case 3}:  $p_0 \neq p_1$ and $\mu>0$.
Without loss of generality let $p_1>p_0$. Then $\tau(t) = \frac{1}{P^t_{1S}}$ is a threshold for learning for an agent conditioning only on signal survival, as shown in the proof of Lemma \ref{learn_survival}.
Let $b(t)$ denote the beliefs of agent B after observing the outcome of $n(t)$ original sources of information sent along chains of depth $t$.
Since agent $B$ conditions on survival and signal content, $plim \ b(t) \to 1$ or 0 whenever $n(t)/\tau(t) \to \infty$.
When  $n(t)/\tau(t) \to 0$, then the probability of even a single signal surviving to reach the
agent approaches 0. This holds regardless of the starting state by Lemma \ref{facts} part 2, so   $plim \  b(t) \to \theta$. Therefore, agent B and S have the same thresholds for learning in this case.\footnote{Strictly speaking, we only showed that they share a common threshold, but it is easy to see that being a threshold for learning for B, for S or for neither partitions the space of functions on $\mathbb{N} \to \mathbb{N}$.}.\eproof

\end{document}